\title{From Quantified CTL to QBF} 
\author{Akash Hossain}{IRIF, Univ.\ Paris Diderot, France}{akashoss42@gmail.com}{}{ANR FREDDA}
\author{Fran\c{c}ois Laroussinie}{IRIF, Univ.\ Paris Diderot, France}{francoisl@irif.fr}{}{ANR FREDDA}
\authorrunning{A. Hossain and F. Laroussinie}
\keywords{Model-checking, Quantified CTL, QBF solvers, SAT based model-checking.}
\begin{document}

\maketitle

\begin{abstract}

\QCTL\ extends the temporal logic \CTL\ with quantifications over atomic propositions. 
This extension is known to be very expressive~: \QCTL\ allows us to express complex properties over Kripke structures (it is  as expressive as \MSO).
Several semantics exist for the quantifications~: 
here, we work with  the \emph{structure semantics},  where  the extra propositions label 
the Kripke structure (and not its execution tree), and the model-checking problem is known to be \PSPACE-complete in this framework. 
We propose a model-checking algorithm for \QCTL\  based on a reduction to \QBF. We consider several reduction strategies, and we compare them with a prototype (based on the SMT-solver Z3) on several examples. 

\end{abstract}


\section{Introduction}

Temporal logics have been introduced in computer science in the late
1970's~\cite{Pnu77}; they~provide a powerful formalism for specifying  correctness properties of
 evolving systems. Various kinds of temporal
logics have been defined, with different expressiveness and
algorithmic properties. For~instance, the~\emph{Computation Tree Logic}~(\CTL)
expresses properties of the computation tree of the system under study (time is branching~: a state may have several successors), and the 
\emph{Linear-time Temporal
Logic}~(\LTL) expresses properties of one execution at a time (a system is viewed as a set of executions).

Temporal logics allow \emph{model checking}, i.e.\ the automatic verification that a finite state system satisfies its expected behavioral specifications~\cite{QS82a,CE82}. It is well known that \CTL~model-checking is \PTIME-complete and \LTL~model-checking (based on automata techniques) is \PSPACE-complete. But verification tools exist for both logics and model-checking is now commonly used in the design of critical reactive systems. The main limitation to this approach is the state-explosion problem~: symbolic techniques (for example with  BDD), SAT-based approaches, or partial order reductions have been developed and they are impressively successful. The SAT-based model-checking consists in using  SAT-solvers in the decision procedures. It was first developed for bounded model-checking (to search for executions whose length is bounded  by some integer,  satisfying some temporal property) which can be reduced to some satisfiability problem and then can be solved by a SAT-solver~\cite{BiereCCZ99}. SAT approaches have also been extended to unbounded verification and combined with other techniques~\cite{McMillan02}. Many studies have been done in this area, and it is widely considered as  an important approach in practice, which complements other symbolic techniques like BDD (see~\cite{BiereCCSZ03} for a survey).

In~terms of expressiveness, \CTL\ (or  \LTL) still has some limitations~: in~particular,
it~lacks the ability of \emph{counting}. For instance, it~cannot express that
an event occurs (at~least) at every even position along a path, or that a
state has two successors. In~order to cope with this, temporal logics have
been extended with \emph{propositional quantifiers}~\cite{sis83}~: those
quantifiers allow for adding fresh atomic propositions in the model before
evaluating the truth value of a temporal-logic formula. That~a state has at
least two successors can then be expressed (in~\emph{quantified \CTL},
hereafter written~\QCTL) by saying that it~is possible to label the model with
atomic proposition~$p$ in such a way that there is a successor that is labelled
with~$p$ and one that is~not. 

Different semantics for \QCTL\ have been studied in literature depending on the definition of the labelling~: either it refers to the finite-state model -- it is the \emph{structure} semantics -- or it refers to the execution tree -- it is the \emph{tree semantics}. 
Both semantics are interesting and have been extensively studied~\cite{Kup95a,Fre01,PBDDC02,Fre03,DLM12,LM14}.
While the tree semantics allow us to use the tree automata techniques to get decision procedures (model-checking and satisfiability are \TOWER-complete~\cite{LM14}), the situation is quite different for the structure semantics~: in this framework, model-checking is \PSPACE-complete and satisfiability is undecidable~\cite{Fre01}.

In this paper, we focus  on the structure semantics, and we propose a model-checking algorithm based on a reduction to \QBF~: given a Kripke structure $\calK$ and a \QCTL\ formula $\Phi$, we show how to build a \QBF\ formula $\widehat{\Phi}^\calK$ which is valid iff $\calK\sat \Phi$.  It is  natural to use \QBF\  quantifiers  to deal with propositional  quantifiers of \QCTL. Of course, \QBF-solvers are not as efficient as SAT-solvers, but still much progress has been made   (and \QBF-solvers have already been considered for model-checking, as in~\cite{DershowitzHK05}).
We propose several reductions depending on the way of dealing with nested temporal modalities, and 
we have implemented a prototype (based on Z3 SMT-solver~\cite{MouraB08}) to compare these reductions over  several examples. As far as we know, it is the first implementation of a model-checker for \QCTL. 

Here, our first objective is  to use the \QBF-solver as a tool to check complex properties over limited size models, and this is therefore different from the classical use of  \SAT-based techniques which are precisely applied to solve verification problems for very large systems.  

The outline of the paper is as follows~: we~begin with setting up the necessary
formalism in order to define \QCTL. We~then devote
Section~\ref{sec-reduc} to the different reductions to \QBF. Finally,
Section~\ref{sec-experiments} contains several practical results and examples.


\section{Definitions}

\subsection{Kripke structures}

Let $\AP$ be a set of atomic propositions.

\begin{definition}
A~\emph{Kripke structure} is a tuple $\calK=\tuple{V,E,\ell}$, where $V$~is a
finite set of vertices (or states), $E\subseteq V\times V$ is a set of edges
(we assume  that for any~$v\in V$, there exists $v'\in V$ s.t. $(v,v')\in E$),
and $\ell\colon V\to 2^{\AP}$ is a labelling function.
\end{definition}

An infinite  path (also called an execution) in a Kripke structure is an infinite  sequence $\rho = x_0 x_1 x_2 \ldots$  such that for any $i$ we have $x_i \in V$ and $(x_i,x_{i+1})\in E$. We~write  $\Path^\omega_\calK$ for the
set of infinite paths of~$\calK$ and $\Path^\omega_\calK(x)$ for the set of infinite paths issued from $x\in V$.
Given such a path $\rho$, we use $\rho_{\leq i}$ to denote the $i$-th prefix $x_0\ldots x_i$, $\rho_{\geq i}$ for the $i$-th suffix $x_i x_{i+1}\ldots$, and $\rho(i)$ for the vertex $x_i$. The size of $\calK$ is $|V|+|E|$.

 Given a set $P\subseteq \AP$, two  Kripke structures $\calK=(V,E,\ell)$ and $\calK'=(V',E',\ell')$ are said $P$-equivalent (denoted by $\calK \equiv_P \calK'$) if $V=V'$, $E=E'$, and for every $x \in V$ we have~: $\ell(x)\cap P = \ell'(x) \cap P$.

\subsection{\QCTL}
This section is devoted to the definition of the logic \QCTL ,\ which extends the classical branching-time temporal logic \CTL\ with  quantifications over atomic propositions.   

\begin{definition}
\label{def-QCTL}
The syntax of $\QCTL$ is defined by the following grammar~:
\begin{xalignat*}1
\QCTL \ni \phi ,\psi  &\coloncolonequals  q \mid \neg\phi \mid \phi \ou\psi  
    \mid \Ex \X \phi  \mid \Ex \phi \Until \psi \mid \All \phi \Until \psi  \mid \exists p.\ \phi 
\end{xalignat*}
where $q$ and~$p$ range over~$\AP$.
\end{definition}

\QCTL\  formulas are evaluated over states of  Kripke structures~:

\begin{definition}
Let $\calK=\tuple{V,E,\ell}$ be a Kripke structure, and $x \in V$. 
The semantics of \QCTL\ formulas is defined inductively as follows~: 
\begin{align*}
\calK, x \models & p  \text{ iff }  p\in \ell(x) \\
\calK,  x \models & \neg\phi  \text{ iff }  \calK,  x \not\models \phi \\
\calK,  x \models & \phi \ou \psi  \text{ iff }  \calK,  x
\models \phi \text{ or }  \calK,  x \models \psi \\
\noalign{\pagebreak[2]}
\calK,  x \models & \Ex \X \phi   \text{ iff }  \exists (x,x') \in E \text{ s.t.\ } \calK, x' \models \phi \\
\calK,  x \models & \Ex \phi  \Until \psi  \text{ iff }  \exists  \rho  \in \Path^\omega_\calK(x),  
  \exists i\geq 0  \text{ s.t.\ }  \calK,  \rho(i) \models \psi \text{ and } \\
   & \qquad \qquad \qquad \qquad \qquad \text{ for any }  0 \leq j <i, \text{ we have }  \calK, \rho(j) \models \phi  \\
\calK,  x \models &\All \phi  \Until \psi  \text{ iff }  \forall  \rho  \in \Path^\omega_\calK(x),  
  \exists i\geq 0  \text{ s.t.\ }  \calK,  \rho(i) \models \psi \text{ and }  \\
  & \qquad \qquad \qquad  \qquad\qquad  \text{ for any }  0 \leq j <i, \text{ we have }  \calK, \rho(j) \models \phi  \\
  \calK,  x   \models & \exists p.\ \phi  \text{ iff } \exists \calK'
\equiv_{\AP\backslash\{p\}} \calK  \text{ s.t. } \calK', x \models \phi 
\end{align*}
\end{definition}

In~the sequel, we use standard abbreviations such as $\top$, $\bot$, $\et$, $\impl$ and $\equivalent$.
We also use the additional  (classical)   temporal modalities  of \CTL~: $\All\X\phi \eqdef \non \Ex \X \non \phi$
, $\Ex\F\phi \eqdef \Ex \top \Until \phi$, $\All\F\phi \eqdef \All \top \Until \phi$, 
$\Ex \G\phi \eqdef \neg\All \F\neg\phi$,  $\All \G\phi \eqdef \neg\Ex \F\neg\phi$,  $\Ex \vfi \WUntil \psi \eqdef \non \All \non \psi \Until (\non \psi \et \non \vfi)$ and $\All \vfi \WUntil \psi \eqdef \non \Ex \non \psi \Until (\non \psi \et \non \vfi)$. 

Moreover, we use the following abbreviations related to quantifiers over atomic  propositions~:   $\forall p.\ \phi \eqdef \neg\exists p.\ \neg\phi$,  and for a set $P = \{p_1,\ldots, p_k\} \subseteq \AP$, we write $\exists P. \phi$ for  $\exists p_1.  \ldots \exists p_k. \phi$ and $\forall P. \phi$ for  $\forall p_1. \ldots \forall p_k. \phi$.
 
The~size of a formula~$\phi\in\QCTL$, denoted $\size\phi$, is defined inductively by~: $\size q = 1$, $\size{ \non \phi} = \size{\exists p.\phi} = \size{\Ex\X \phi} =  1 + \size \phi$, $\size{\phi \ou \psi} = \size{\Ex \phi \Until \psi}= \size{\All \phi \Until \psi}= 1 + \size \phi + \size \psi$.  
 Moreover we use $\HT(\vfi)$ to denote the  \emph{temporal height} of  $\vfi$, that is the maximal number of nested temporal modalities in $\vfi$. And given a subformula $\psi$ in $\Phi$, the \emph{temporal depth} of $\psi$ in $\Phi$ (denoted $\tdepth_\Phi(\psi)$)  is the number of temporal modalities having $\vfi$ in their scope.
 
In the following, we denote by $\SubF(\Phi)$ (resp.\ $\SubTF(\Phi)$) the set of subformulas of $\Phi$ 
(resp.\ the set of subformulas starting with a temporal modality).

Two \QCTL\ formulas $\vfi$ and $\psi$ are said to be \emph{equivalent} (written $\vfi \equiv \psi$) iff for any structure $\calK$, any state $x$, we have $\calK, x \sat \vfi$ iff   $\calK, x \sat \psi$. This equivalence is substitutive.

\paragraph{Discussion on the semantics.}

The semantics we defined is classically called the \emph{structure semantics}~: a formula $\exists p.\phi$ holds true in a Kripke structure $\calK$ iff there exists a $p$-labeling of the structure $\calK$ such that $\phi$ is satisfied. 
 Another  well-known semantics coexists in the literature for propositional
quantifiers,  the  \emph{tree semantics}~: $\exists p.\ \phi$ holds
 true when there exists a $p$-labelling of the \emph{execution tree} (the infinite unfolding) of the
 Kripke structure under which $\phi$~holds. If, for \CTL, interpreting formulas over the structure or the execution tree is equivalent, this is not the case for \QCTL. Moreover,   these two semantics do not have the same algorithmic  properties~: if \QCTL\ model-checking and satisfiability are \TOWER-complete for the tree semantics (the algorithms are based on tree automata techniques), \QCTL\ model-checking  is \PSPACE-complete for the structure semantics but satisfiability is undecidable.
   (see~\cite{LM14} for a survey). Nevertheless, in both semantics, \QCTL\ and \QCTLs\ (the  extension of \CTLs\ with quantifications) are equally expressive, and are as expressive~\footnote{This requires adequate definitions,
since a temporal logic formula may only deal with the reachable part of the
model, while \MSO\ has a more \emph{global} point of view.} as the Monadic Second-Order Logic over the finite structure or the infinite trees (depending on the semantics). Note also that any \QCTL\ formula is equivalent to a formula in Prenex normal form (we will use this result in next sections). 

Finally, there is also  the \emph{amorphous semantics}~\cite{Fre01}, where
$\exists p.\ \phi$ holds true at a state~$s$ in some Kripke structure~$\calK$
if, and only if, there exists some Kripke structure~$\calK'$ with a state $s'$
such that $s$ and $s'$ are bisimilar, and for which there exists a $p$-labeling
making $\phi$ hold true at~$s'$. With these semantics, the logic is insensitive
to unwinding, and more generally it is bisimulation-invariant (contrary to the two previous semantics, see below). 

\subsection{Examples of \QCTL\ formulas}

\QCTL\ allows us to express complex properties over Kripke structures~: for example, it is easy to build a characteristic formula (up to isomorphism) of a structure, and one can also reduce model-checking problems for multi-player games to \QCTL\ model-checking~\cite{LaroussinieM15}\ldots Below, we give several examples of counting properties, to illustrate the expressive power of propositional quantifiers.

The first one of the   formulas below expresses that there exists a unique reachable state verifying $\vfi$, and the second one states that there exists a unique immediate successor satisfying $\vfi$~:
\begin{align}
\Ex_{=1} \F(\vfi) &\;\eqdef\; \EF (\vfi) \et \forall p. \big(\EF (p\et \vfi) \impl \AG (\vfi \impl p)\big)  \label{FuniqF} \\
\Ex_{=1}\X\phi  &\;\eqdef\; \Ex\X\phi \et \forall p.\ \big(\All\X(\phi \impl p) \ou \All\X(\phi \impl \non p)\big) \label{FuniqX}
\end{align}
where we assume that $p$ does not appear in~$\vfi$. Consider the formula~\ref{FuniqF}:  if there were two reachable states satisfying $\vfi$, then labelling only one of them with~$p$ would falsify the $\AG$ subformula. For \ref{FuniqX}, the argument is similar.

The existence of 
at least $k$ successors satisfying a given property can be expressed with~:
\begin{align}
\Ex_{\geq k}\X\phi = & \exists P.\ 
\Bigl(\ET_{1\leq i\leq k} \Ex\X \bigl(p_i\et \ET_{i'\not= i} \non p_{i'}\bigr) \et
\All\X\Bigl(\bigl(\OU_{1\leq i\leq k} p_i\bigr) \impl \phi\Bigr)\Bigr) 
\end{align}

And we can define $\Ex_{=k} \X \phi$ as to $\Ex_{\geq k} \X \phi  \:\et\: \non \Ex_{\geq k+1} \X \phi$. Note that
these examples show why \QCTL\ formulas are not  bisimulation-invariant. 

When using \QCTL\ to specify properties, one often needs to quantify (existentially or universally) over \emph{one} reachable state we want to mark with a given atomic proposition. To this aim, we add the following abbreviations~: 
\[
\exists^1p.\vfi \eqdef  \exists p. \big( (\Ex_{=1}\F \: p) \et \vfi\big) \quad\quad\quad
\forall^1p.\vfi \eqdef  \forall p. \big( (\Ex_{=1}\F \:  p) \impl \vfi\big)
\]


\section{Model-checking QCTL}
\label{sec-reduc}

Model-checking \QCTL\ is a \PSPACE-complete problem (for detailed results about  program complexity and formula complexity, see~\cite{LM14}), and it is \NP-complete for the  restricted set of formulas of the form $\exists P.\vfi$, with $P\subseteq \AP$ and $\vfi \in \CTL$\cite{Kup95a}.
In this section, we give a reduction from the \QCTL\ model-checking problem to \QBF.

In the following, we assume a Kripke structure $\calK=\tuple{V,E,\ell}$, an initial state $x_0 \in V$ and a \QCTL\ formula $\Phi$ to be fixed. Let $V$ be $\{v_1,\ldots,v_n\}$. 
We also assume w.l.o.g.\  that every quantifier $\exists$ and $\forall$ in $\Phi$ introduces a fresh atomic proposition, and distinct from the propositions used in $\calK$. We use $\AP_Q^\Phi$ to denote the set of  quantified atomic propositions in $\Phi$. 

These  assumptions allow us to use an alternative notation for the semantics of $\Phi$-subformulas~: the truth value of $\vfi$ will be defined for a state $x$ in  $\calK$ within an \emph{environment} $\varepsilon:\AP_Q^\Phi\mapsto 2^V$, that is a partial mapping  associating a subset of vertices to a proposition in $\AP_Q^\Phi$. We use $\calK,x \sat_\varepsilon \vfi$ to denote that $\vfi$ holds at $x$ in $\calK$ within $\varepsilon$.
This ensures that the $\calK$'s labelling $\ell$ is not modified when a subformula is evaluated, only $\varepsilon$ is extended with labellings for  new quantified propositions. 
Formally the main changes of the semantics are as follows~:
\begin{align*}
\calK, x \models_\varepsilon p & \text{ iff }   \begin{cases}  \top & \text{if}\:  (p \in \AP_Q^\Phi \et x \in \varepsilon(p))  \ou (p \not\in \AP_Q^\Phi \et p \in \ell(x)) \\ \bot & \text{otherwise} \end{cases} \\
\calK,  x   \models_\varepsilon  \exists p.\ \phi & \text{ iff } \exists V' \subseteq V  \text{ s.t. } \calK, x \models_{\varepsilon[p \mapsto V']} \phi 
\end{align*}
where $\varepsilon[p \mapsto V']$ denotes the mapping    which coincides with $\varepsilon$ for every proposition in $AP_Q^\Phi\setminus \{p\}$ and associates $V'$ to $p$. 

We use this new notation in order to better distinguish  initial $\calK$'s propositions and quantified propositions to make proofs simpler. Of course, there is no semantic difference~: $\calK,x  \models \Phi$ iff $\calK, x \models_\emptyset \Phi$. 

\medskip

In next sections, we consider general \emph{quantified propositional formulas} (\QBF) of the form~:
\[
\QBF \ni \alpha, \beta \coloncolonequals  q \mid \alpha \et \beta \mid \alpha \ou \beta \mid  \non \alpha \mid \alpha \equivaut \beta \mid \alpha \impl \beta \mid \exists q.\alpha \mid \forall q.\alpha 
\]
The formal semantics of  a formula $\alpha$ is defined over a Boolean valuation for free variables in $\alpha$ (\ie\ propositions which are not bounded by a quantifier~\footnote{We assume w.l.o.g.\ that every quantifier $\exists$ and $\forall$ introduces a new proposition.}), and it is defined as usual. A  formula  is said to be closed when it does not contain free variables. In the following, we use the standard notion of validity for closed \QBF\ formulas.

Our aim is then to build a (closed) \QBF\ formula $\widehat{\Phi}^{x_0}$ such that   $\widehat{\Phi}^{x_0}$  is valid iff $\Phi$ holds true at $x_0$ in $\calK$.

\subsection{Overview}

We present several reductions of \QCTL\ model-checking problem to \QBF\ validity problem. 
Given a $\Phi$-subformula $\vfi$, a vertex $x \in V$, and a subset $P \subseteq AP_Q^\Phi$, we define a \QBF\ formula $\widehat{\vfi}^{x,P}$ whose variables belongs to $\AP_Q^\Phi\times V$ (in the following, we use the notation $p^x$ for  $p\in \AP_Q^\Phi$ and  $x\in V$). 
The first two reductions are based on different encodings of temporal modalities, but share a common part given at Table~\ref{reduc-main}.

\begin{table}
\begin{align*}
\widehat{\non \vfi}^{x,P} &\eqdef \non \widehat{\vfi}^{x,P} \quad \quad 
\widehat{\vfi \ou \psi}^{x,P} \eqdef  \widehat{\vfi}^{x,P} \ou \widehat{\psi}^{x,P} \quad \quad 
\widehat{\EX  \vfi}^{x,P} \eqdef   \OU_{(x,x') \in E} \widehat{\vfi}^{x',P} \\
\widehat{\exists p. \vfi}^{x,P} &\eqdef \exists p^{v_1} \ldots p^{v_n}.  \widehat{\vfi}^{x,P\cup\{p\}} \quad\quad\quad
\widehat{p}^{x,P} \eqdef \begin{cases} p^x & \text{if}\: p \in P \\ \top & \text{if}\:  p \not\in P  \et p \in \ell(x) \\ 
\bot & \text{otherwise} \\ \end{cases} \\ 
\end{align*}
\caption{Reduction for basic modalities  for reductions~\MetUU\ and ~\MetFP}
\label{reduc-main}
\end{table}

\subsubsection{Unfolding characterization of the until operators}

First, we can complete  previous construction rules of  Table~\ref{reduc-main}  with those of Table~\ref{compl1} to get the first method (called \MetUU). This is a  naive approach consisting in encoding  the temporal modalities as unfoldings of the transition relation.

Before stating the correctness of the construction, we need to associate a Boolean valuation $v_\varepsilon$ for variables in $\AP_Q^\Phi\times V$ to an environment $\varepsilon$ for $\AP_Q^\Phi$.  We define $v_\varepsilon$ as follows~: for any $p\in \AP_Q^\Phi$ and $x \in V$,  $v_\varepsilon(p^x) = \top$ iff $x \in \epsilon(p)$.   

\medskip

Now we have the following theorem whose proof is in appendix~\ref{app-uu}~:
\begin{theorem}
\label{theo-UU}
Given a \QCTL\ formula $\Phi$, a Kripke structure    $\calK=\tuple{V,E,\ell}$, a state $x \in V$, an environment $\varepsilon : \AP_Q^\Phi \mapsto 2^V$ and a  $\Phi$-subformula $\vfi$, if  $\widehat{\vfi}^{x,\dom(\varepsilon)}$ is defined inductively w.r.t.\ the rules of Tables~\ref{reduc-main} and~\ref{compl1}, we have:
$\calK, x \sat_\varepsilon \vfi \quad\mbox{iff}\quad v_\varepsilon \sat \widehat{\vfi}^{x,\dom(\varepsilon)}$
\end{theorem}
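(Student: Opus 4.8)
The plan is to prove the equivalence by structural induction on the $\Phi$-subformula $\vfi$, with $\dom(\varepsilon)$ playing the role of the tracked parameter $P$. The key invariant I would maintain is that throughout the induction, the environment $\varepsilon$ and the Boolean valuation $v_\varepsilon$ stay synchronized in exactly the way dictated by the definition $v_\varepsilon(p^x) = \top \text{ iff } x \in \varepsilon(p)$, and that $P = \dom(\varepsilon)$ at every recursive call. Since Table~\ref{compl1} is not shown in the excerpt, I would assume it provides the standard fixpoint-style unfolding of the until operators (an existential/disjunctive unfolding for $\Ex \vfi \Until \psi$ and a conjunctive/universal one for $\All \vfi \Until \psi$, bounded by the number of vertices $n = |V|$, since on a finite structure a witnessing path prefix for an until never needs to be longer than $n$).

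First I would dispatch the Boolean base and inductive cases. For an atomic proposition $p$, I would check the three branches of the $\widehat{p}^{x,P}$ definition against the modified semantics of $\sat_\varepsilon$ for propositions: when $p \in P = \dom(\varepsilon)$ the clause $p^x$ is evaluated by $v_\varepsilon$ to $\top$ iff $x \in \varepsilon(p)$, matching the first semantic case; when $p \notin P$ the proposition is an original $\calK$-proposition, so $\widehat{p}^{x,P}$ reduces to the constant $\top$ or $\bot$ according to whether $p \in \ell(x)$, matching the second semantic clause. The cases $\non \vfi$ and $\vfi \ou \psi$ are immediate from the induction hypothesis and the homomorphic definitions in Table~\ref{reduc-main}. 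For $\Ex \X \vfi$, the disjunction $\OU_{(x,x') \in E} \widehat{\vfi}^{x',P}$ maps directly onto the existential over successors $(x,x') \in E$ in the semantics, again closing by the induction hypothesis applied at each successor $x'$ with the same $P$.

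The two cases that carry the real content are the quantifier $\exists p.\vfi$ and the until operators. For the quantifier, I would show that quantifying in \QBF\ over the whole block of variables $p^{v_1}, \ldots, p^{v_n}$ corresponds exactly to quantifying in the semantics over all subsets $V' \subseteq V$ that $\varepsilon$ can assign to $p$: each Boolean assignment to $(p^{v_1},\ldots,p^{v_n})$ encodes precisely one such $V'$, and under this correspondence $v_{\varepsilon[p \mapsto V']}$ is exactly $v_\varepsilon$ updated on the block $\{p^{v_i}\}_i$. The induction hypothesis then applies to $\widehat{\vfi}^{x,P\cup\{p\}}$ with the extended environment $\varepsilon[p \mapsto V']$, whose domain is indeed $P \cup \{p\}$, and the $\exists$/$\exists$ match closes the case. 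For the until operators I would rely on the finite-model fixpoint characterization: $\calK, x \sat_\varepsilon \Ex \vfi \Until \psi$ holds iff there is a witnessing path prefix, which by finiteness can be taken loop-free and hence of length at most $n$, and this is precisely what the bounded unfolding in Table~\ref{compl1} captures, with the dual argument (over all successors at each step) for $\All \vfi \Until \psi$.

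The main obstacle I anticipate is the until case, specifically justifying that the finite unfolding in Table~\ref{compl1} faithfully encodes the semantic path quantifier. For $\Ex \vfi \Until \psi$ this amounts to arguing that if some (possibly infinite) path witnesses the until then a prefix of bounded length does too, which is routine on finite structures but must be stated carefully; for $\All \vfi \Until \psi$ the subtlety is that the unfolding must correctly account for \emph{all} paths, including the requirement (built into the semantics via the totality assumption $\forall v \exists v'.\, (v,v') \in E$ on $\calK$) that $\psi$ is eventually reached along every branch. I would therefore prove the until equivalence as a small separate inductive lemma on the unfolding depth, showing that the depth-$k$ unfolding is satisfied by $v_\varepsilon$ iff the semantic until is witnessed within $k$ steps, and then instantiate at $k = n$.
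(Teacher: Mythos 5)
Your induction scheme, the synchronization invariant between $\varepsilon$ and $v_\varepsilon$, and the atomic, Boolean, $\EX$ and $\exists p$ cases all coincide with the paper's proof. The genuine gap sits exactly where you located the real content: both until cases are argued for an encoding that is not the one the theorem refers to. You assumed (explicitly, for lack of seeing the table) that Table~\ref{compl1} is a depth-bounded unfolding cut off at depth $n=|V|$; in fact it is an unfolding indexed by the set $X$ of vertices already visited along the current prefix. For $\Ex\vfi\Until\psi$ the disjunction ranges only over successors $x'\notin X$ and recurses with $X\cup\{x'\}$, so the formula enumerates \emph{simple} paths (this is the source of the $O(|V|!)$ size bound stated in the paper); the correct invariant is of the form ``$v_\varepsilon \sat \overline{\Ex \vfi \Until \psi}^{x,P,X}$ iff some finite path from $x$ witnessing $\vfi\Until\psi$ never re-enters $X$'', which the paper establishes by an inner induction along a simple witnessing prefix (forward direction) and a greedy construction of a simple path (backward direction). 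Your key lemma, ``the depth-$k$ unfolding is satisfied iff the until is witnessed within $k$ steps, instantiated at $k=n$'', cannot be applied to these definitions: they have no depth parameter to induct on, only the growing visited set.

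The mismatch is most serious for $\All\vfi\Until\psi$. The actual rule makes a case distinction and collapses the whole formula to $\widehat{\psi}^{x,P}$ as soon as $x$ has an edge back into $X$. Justifying this collapse is the crux of the paper's $\AU$ case: a back-edge closes a lasso running through already-visited states, and along such a looping path the until can only be fulfilled if $\psi$ holds at $x$ itself (in the only evaluation context where this subformula matters, $\psi$ has already failed along the prefix $X$). This loop-detection argument has no counterpart in your proposal; the pigeonhole bound you invoke (every path must witness the until within $n$ steps when $\AU$ holds) would justify a depth-$n$ unfolding, but it says nothing about the back-edge rule. To be clear, what you prove is a true statement --- the depth-bounded translation you describe is itself a correct reduction with essentially the proof you sketch --- but it is a theorem about a different translation, not Theorem~\ref{theo-UU}; repairing the proof requires re-indexing your until lemma by visited sets and supplying the lasso argument for $\AU$.
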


\begin{table}
\begin{align}
\widehat{\Ex \vfi \Until \psi}^{x,P} &\eqdef   \overline{\Ex \vfi \Until \psi}^{x,P,\{x\}} \quad \quad \text{with:} \\
\overline{\Ex \vfi \Until \psi}^{x,P,X} &\eqdef \; \widehat{\psi}^{x,P} \ou \Big(\widehat{\vfi}^{x,P} \et \OU_{(x,x')\in E s.t. x'\not\in X} \overline{\Ex \vfi \Until \psi}^{x',P,X\cup\{x'\}} \Big)  \\
\widehat{\All \vfi \Until \psi}^{x,P} &\eqdef   \overline{\All \vfi \Until \psi}^{x,P,\{x\}} \quad \quad \text{with:} \\
\overline{\All \vfi \Until \psi}^{x,P,X} &\eqdef \; \begin{cases} \widehat{\psi}^{x,P}  & \text{if}\: \exists (x,x') \in E \et x'\in X\\ 
 {\displaystyle \widehat{\psi}^{x,P}  \ou \Big(\widehat{\vfi}^{x,P} \et \ET_{(x,x')\in E} \overline{\All \vfi \Until \psi}^{x',P,X\cup\{x'\}} \Big)} & \text{otherwise} \\ \end{cases}
\end{align}
\caption{Reduction for temporal modalities  $\EU$ and $\AU$ -- variant \MetUU}
\label{compl1}
\end{table}

It remains to define $\widehat{\Phi}^x$ as  $\widehat{\Phi}^{x,\emptyset}$ and we get the reduction~: $\calK, x_0 \sat  \Phi$ iff $\widehat{\Phi}^{x_0}$ is valid.

The main drawback of this reduction is the size of the \QBF\ formula~:  indeed any Until modality may induce a formula whose size is in $O(|V|!)$, and the size of the resulting formula $\widehat{\Phi}^{x_0}$ 
is then  in $O((|\Phi|\cdot|V|!)^{\HT(\Phi)})$. Nevertheless, one can notice that the reduction does not use new quantified propositons to encode the temporal modalities, contrary to other methods we will see later. 

\subsection{Fixed point characterization of the until operators}

Here we present the fixed point method (called \MetFP) for dealing with the modalities $\All\Until$\ and $\Ex\Until$. Let $\phi$ and $\psi$ be two \QCTL-formulas. The idea of the method is to build a \QCTL\  formula that is equivalent to $\Ex\phi\Until\psi$ (or $\All\phi\Until\psi$), using only the modalities  $\Ex\X$, $\All\X$ and  $\All\G$. 

We first have the following lemma~:

\begin{lemma}
\label{lemFP1}
For any \QCTL\ formula $\Ex \vfi  \Until \psi$, we have~:
\[
\Ex \vfi  \Until \psi \quad\equiv\quad \forall z. \Big( \AG \big( z \equivalent (\psi \ou (\vfi \et \EX \: z)) \big) \; \impl \; z\Big)
\]
\end{lemma}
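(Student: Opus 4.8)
The plan is to read the right-hand side as the assertion that the current state belongs to \emph{every} fixed point, on the reachable part of $\calK$, of the monotone operator $Z \mapsto \psi \ou (\vfi \et \EX Z)$, and then to exploit the fact that $\Ex\vfi\Until\psi$ is the \emph{least} such fixed point (Knaster--Tarski). Fix a state $x$, and for a set $V' \subseteq V$ let $\calK_{V'}$ be the structure obtained from $\calK$ by relabelling $z$ to hold exactly on $V'$; by the semantics of $\forall z$, the right-hand side holds at $x$ iff for every such $V'$, whenever $\calK_{V'}$ satisfies the antecedent it satisfies the consequent. The antecedent $\AG\big(z \equivalent (\psi \ou (\vfi \et \EX z))\big)$ holds at $x$ precisely when, for every $y$ reachable from $x$, $y \in V'$ iff $\calK_{V'}, y \sat \psi \ou (\vfi \et \EX z)$. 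Since $z$ is fresh, the truth of $\vfi$, $\psi$ and $\Ex\vfi\Until\psi$ does not depend on the labelling of $z$; I would record upfront the standard unfolding identity $\Ex\vfi\Until\psi \equiv \psi \ou (\vfi \et \EX(\Ex\vfi\Until\psi))$, which shows that $S \eqdef \{y \in V : \calK, y \sat \Ex\vfi\Until\psi\}$ is itself a (global, hence reachable-restricted) fixed point of the operator.

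For the $\impl$ direction, assume $\calK, x \sat \Ex\vfi\Until\psi$ and let $V'$ be any labelling satisfying the antecedent; I must show $x \in V'$. Pick a witnessing path $\rho \in \Path^\omega_\calK(x)$ and an index $i$ with $\calK, \rho(i) \sat \psi$ and $\calK, \rho(j) \sat \vfi$ for all $j < i$. Every $\rho(j)$ with $j \le i$ is reachable from $x$, so the fixed-point equivalence applies at each of them. A finite downward induction on $j$ from $i$ to $0$ then gives $\rho(j) \in V'$: at $j = i$, $\rho(i) \sat \psi$ yields $\rho(i) \sat \psi \ou (\vfi \et \EX z)$ and hence $\rho(i) \in V'$; for $j < i$, the hypothesis $\rho(j+1) \in V'$ witnesses $\calK_{V'}, \rho(j) \sat \EX z$, which with $\rho(j) \sat \vfi$ gives $\rho(j) \in V'$. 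At $j = 0$ this yields $x = \rho(0) \in V'$.

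For the converse I argue by contraposition: assume $\calK, x \not\sat \Ex\vfi\Until\psi$ and exhibit a labelling falsifying the implication. Take $V' \eqdef S$. By the unfolding identity, at every reachable $y$ we have $y \in S$ iff $\calK_S, y \sat \psi \ou (\vfi \et \EX z)$, because $\EX z$ evaluated under the labelling $S$ agrees with $\EX(\Ex\vfi\Until\psi)$; thus $\calK_S$ satisfies the antecedent at $x$. Since $x \notin S$, the consequent $z$ fails at $x$, so the implication, and therefore the universally quantified formula, fails at $x$ under this labelling.

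The step I expect to require the most care is matching the \emph{reachable-only} scope of the $\AG$ modality with the fixed-point reasoning. The equivalence is imposed only on states reachable from $x$, so in the $\impl$ direction I must check that all states $\rho(j)$ invoked lie in that set (they do, being on a path issued from $x$), and in the converse that, under the labelling $S$, $\EX z$ genuinely coincides with $\EX(\Ex\vfi\Until\psi)$ on the reachable part, which holds because successors of reachable states are again reachable. The freshness of $z$, ensuring that relabelling $z$ leaves $\vfi$, $\psi$ and $\Ex\vfi\Until\psi$ untouched, is used in both directions and should be stated explicitly.
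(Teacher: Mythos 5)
Your proof is correct and takes essentially the same route as the paper's: both read the right-hand side as saying that the current state belongs to every fixed point, over the reachable part of $\calK$, of the operator $Z \mapsto \psi \ou (\vfi \et \EX\, Z)$, then use for one direction that the set of states satisfying $\Ex \vfi \Until \psi$ is itself such a fixed point, and for the other that it is the least one. Your write-up is in fact more explicit than the paper's terse argument, since the downward induction along the witnessing path proves the least-fixed-point inclusion that the paper merely cites, and you state the freshness and reachability side conditions that the paper leaves implicit.
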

\begin{proof}
Let $x$ be a state in a Kripke structure $\calK$. Let $\theta$ be the formula $\big(\AG \big( z \equivalent (\psi \ou (\vfi \et \EX \: z)) \big) \; \impl \; z\big)$. \\
Assume $\calK, x \sat \Ex \vfi  \Until \psi$. We can use the standard characterization of $\EU$ as fixed point~: $x$ belongs to the least fixed point of the  equation $Z \eqdef \psi \ou (\vfi \et \EX \: Z)$ where $\psi$ (resp.\ $\vfi$) is here interpreted as the set of states satisfying $\psi$ (resp.\ $\vfi$). Therefore any $z$-labelling of reachable states from $x$~\footnote{Labelling other states does not matter.} corresponding to a fixed point will have the state $x$ labelled. This is precisely what is specified by the \QCTL\ formula. \\
Now if $\calK, x \sat \theta$ for every $z$-labelling corresponding to a fixed point of the previous equation, this is the case for the $z$-labelling of the states reachable from $x$ and satisfying $\Ex \vfi  \Until \psi$, and we deduce $\calK, x \sat \Ex \vfi  \Until \psi$.  
\end{proof}

And we have the same result for $\AU$~:

\begin{lemma}
\label{lemFP2}
For any \QCTL\ formula $\All \vfi  \Until \psi$, we have~:
\[
\All \vfi  \Until \psi \quad\equiv\quad \forall z. \Big( \AG \big( z \equivalent (\psi \ou (\vfi \et \AX \: z)) \big) \; \impl \; z\Big)
\]
\end{lemma}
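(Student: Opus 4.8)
The plan is to mirror the structure of the proof of Lemma~\ref{lemFP1}, since the two statements differ only in replacing $\Ex\X$ by $\All\X$ and $\Ex\Until$ by $\All\Until$. The core idea is again to recognise the right-hand side as a fixed-point characterisation: $\All\vfi\Until\psi$ is exactly the least fixed point of the monotone operator $Z\mapsto \psi\ou(\vfi\et\All\X\,Z)$, where $\psi$ and $\vfi$ are read as the sets of states satisfying them. So I would first recall this standard fixed-point characterisation of the universal until, and let $\theta$ denote the formula $\big(\AG\big(z\equivalent(\psi\ou(\vfi\et\AX\,z))\big)\impl z\big)$. The assertion $\AG\big(z\equivalent(\psi\ou(\vfi\et\AX\,z))\big)$ says precisely that the $z$-labelling (restricted to reachable states) is a fixed point of this operator, and $\forall z.(\theta)$ says that every such fixed point labels $x$.

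\medskip

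For the forward direction, I would assume $\calK,x\sat\All\vfi\Until\psi$. Since $\All\vfi\Until\psi$ denotes the \emph{least} fixed point, the set of states satisfying it is contained in every fixed point; hence for any $z$-labelling of the reachable states forming a fixed point of the equation $Z\eqdef\psi\ou(\vfi\et\AX\,Z)$, the state $x$ is labelled by $z$, which is exactly what $\theta$ requires. As in Lemma~\ref{lemFP1}, labelling of unreachable states is irrelevant because the modalities only inspect the reachable part. For the converse, I would assume $\calK,x\sat\theta$ for every $z$-labelling, and instantiate $z$ with the labelling that marks exactly the states reachable from $x$ that satisfy $\All\vfi\Until\psi$; one checks this labelling is a fixed point of the equation, so the guard $\AG(\ldots)$ holds, forcing $z$ to hold at $x$, i.e.\ $\calK,x\sat\All\vfi\Until\psi$.

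\medskip

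\textbf{The main obstacle} I expect is the subtle point that the fixed-point semantics of $\All\vfi\Until\psi$ requires careful handling of the requirement that \emph{every} infinite path eventually reaches a $\psi$-state through $\vfi$-states, combined with our running assumption that every vertex of $\calK$ has a successor. In particular, the operator $Z\mapsto\psi\ou(\vfi\et\All\X\,Z)$ must genuinely capture $\All\Until$ rather than its weak variant: because there are no dead ends, $\All\X\,Z$ at a state with no successors never trivialises, so a state that only loops through $\vfi$-states without ever reaching $\psi$ is correctly excluded from the least fixed point. I would make this explicit by noting that the least fixed point is the union of the approximants $Z_0=\emptyset$, $Z_{k+1}=\psi\ou(\vfi\et\All\X\,Z_k)$, where $Z_k$ is the set of states from which every path reaches $\psi$ within $k$ steps while staying in $\vfi$; the union over all $k$ is precisely $\{x : \calK,x\sat\All\vfi\Until\psi\}$. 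Once this identification is pinned down, the remaining reasoning is identical in form to the existential case, so I would keep the write-up brief and simply flag that the argument parallels the proof of Lemma~\ref{lemFP1} with $\Ex\X$ replaced by $\All\X$.
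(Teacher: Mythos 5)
Your proposal is correct and takes essentially the same route as the paper, whose proof of this lemma is literally ``similar to the proof of Lemma~\ref{lemFP1}'': you replay that least-fixed-point argument with the operator $Z\mapsto\psi\ou(\vfi\et\AX\, Z)$, including the same remark that unreachable states' labelling is irrelevant and the instantiation of $z$ by the set of reachable states satisfying $\All\vfi\Until\psi$. The only nitpick is a harmless off-by-one in your approximant indexing ($Z_k$ is the set of states from which every path reaches $\psi$ within $k-1$ steps, not $k$), which does not affect the union and hence not the argument.
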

\begin{proof}
Similar to the proof of Lemma~\ref{lemFP1}
\end{proof}

As a direct consequence, we get the following result~:
\begin{proposition}
\label{prop-qctlres}
For any \QCTL\ formula  $\Phi$, we can build an  equivalent  \QCTL\ formula $\FPC(\Phi)$ such that~:
(1) $\FPC(\Phi)$ is built up from atomic propositions, Boolean operators, propositional quantifiers and modalities $\EX$ and $\AG$, and 
(2) the size of   $\FPC(\Phi)$ is linear in $|\Phi|$.
\end{proposition}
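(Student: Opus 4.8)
The plan is to define $\FPC$ by structural induction on $\Phi$ and to establish the two requirements simultaneously along the induction. For the base case I set $\FPC(q) \eqdef q$, and for every connective and quantifier that already lies in the target fragment I simply push $\FPC$ inward: $\FPC(\non \vfi) \eqdef \non \FPC(\vfi)$, $\FPC(\vfi \ou \psi) \eqdef \FPC(\vfi) \ou \FPC(\psi)$, $\FPC(\Ex \X \vfi) \eqdef \Ex \X \FPC(\vfi)$, and $\FPC(\exists p.\ \vfi) \eqdef \exists p.\ \FPC(\vfi)$. Each of these is immediate, both for correctness (the operator is left untouched) and for size (one node of overhead).

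The only real work is in the two until cases, where I would invoke Lemmas~\ref{lemFP1} and~\ref{lemFP2}. Concretely, I set $\FPC(\Ex \vfi \Until \psi) \eqdef \forall z.\ \Big( \AG \big( z \equivalent (\FPC(\psi) \ou (\FPC(\vfi) \et \EX \: z)) \big) \impl z\Big)$ and, symmetrically, $\FPC(\All \vfi \Until \psi)$ obtained by replacing the inner $\EX z$ with $\AX z$. Since the target fragment admits only $\EX$ and $\AG$, I would rewrite the $\AX z$ appearing in the $\All\Until$ case as $\non \EX \non z$, which keeps the result inside the fragment while adding only a constant number of symbols. Here $z$ is always chosen fresh, so that nested applications of the rule do not capture each other's bound variables; this renaming is always available and costs nothing.

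Correctness then follows by induction. By the induction hypothesis $\FPC(\vfi) \equiv \vfi$ and $\FPC(\psi) \equiv \psi$, and since $\equiv$ is substitutive (as noted in the \QCTL\ preliminaries), substituting these into the right-hand side of Lemma~\ref{lemFP1} (resp.\ Lemma~\ref{lemFP2}) yields a formula equivalent to $\Ex \vfi \Until \psi$ (resp.\ $\All \vfi \Until \psi$). The remaining cases preserve equivalence trivially, so $\FPC(\Phi) \equiv \Phi$ by induction, and requirement~(1) holds since no operator outside $\{\non,\ou,\exists p, \EX, \AG\}$ survives the construction.

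For the size bound, which is where the construction must be handled with care, the crucial observation is that in every rewriting rule above each immediate subformula is transformed exactly once and occurs exactly once in the result; in particular, the until rules do \emph{not} duplicate $\vfi$ or $\psi$. This single-occurrence property is precisely what the fixed-point formulation buys us: naming the least fixed point by the quantified proposition $z$ lets us mention $\vfi$ and $\psi$ only once, whereas a direct unfolding of the until operator would duplicate them and blow the size up exponentially in the temporal height. Consequently each inductive step adds at most a fixed constant $c$ of new symbols on top of $\size{\FPC(\vfi)} + \size{\FPC(\psi)}$, and a routine induction gives $\size{\FPC(\Phi)} \leq c \cdot \size{\Phi}$. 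The main obstacle is therefore not correctness but this bookkeeping — checking that no rule replicates a subformula, together with the fresh-variable discipline that keeps the $\forall z$ quantifiers from interfering.
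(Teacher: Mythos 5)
Your proposal is correct and follows essentially the same route as the paper: the paper also obtains $\FPC$ by recursively rewriting $\EU$ and $\AU$ via the fixed-point equivalences of Lemmas~\ref{lemFP1} and~\ref{lemFP2}, and justifies linearity by exactly the observation you make, namely that these rules never duplicate a subformula (your explicit treatment of $\AX$ as $\non\EX\non$ and the fresh-variable discipline are details the paper leaves implicit).
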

Note that the size of $\FPC(\Phi)$ comes from the fact that there is no duplication of subformulas when applying the transformation rules based on equivalences of Lemmas~\ref{lemFP1} and~\ref{lemFP2}. Moreover, the temporal height of $\FPC(\Phi)$ is $\HT(\Phi)+1$.

And to translate $\FPC(\Phi)$ into \QBF, it remains to add a single rule~\footnote{For $\AX$ we can either change the rule for $\EX$ with a disjunction, or express it with $\EX$ and $\non$.} to the definitions of Table~\ref{reduc-main} to deal with $\AG$~:
\begin{align}
\widehat{\AG \vfi}^{x,P} &\eqdef  \ET_{(x,y)\in E^*}  \widehat{\:\vfi\:}^{\:y,P} 
\label{rule-ag-fp}
\end{align}

And we have~: 
\begin{theorem}
Given a \QCTL\ formula $\Phi$, a Kripke structure    $\calK=\tuple{V,E,\ell}$, a state $x \in V$ and an environment $\varepsilon : \AP_Q^\Phi \mapsto 2^V$, for any   $\Phi$-subformula $\vfi$, if  $\widehat{\FPC(\vfi)}^{x,\dom(\varepsilon)}$ is defined inductively w.r.t.\ the rules of Table~\ref{reduc-main} and the rule~\ref{rule-ag-fp}, we have~:
\[
\calK, x \sat_\varepsilon \vfi \quad\mbox{iff}\quad v_\varepsilon \sat \widehat{\FPC(\vfi)}^{x,\dom(\varepsilon)}
\]
\end{theorem}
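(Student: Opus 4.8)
The plan is to reduce the statement to the correctness already established for the shared rules, by exploiting the equivalence $\vfi \equiv \FPC(\vfi)$ from Proposition~\ref{prop-qctlres}. Concretely, I would split the claimed bi-implication into two links: first, $\calK, x \sat_\varepsilon \vfi$ iff $\calK, x \sat_\varepsilon \FPC(\vfi)$, and second, $\calK, x \sat_\varepsilon \FPC(\vfi)$ iff $v_\varepsilon \sat \widehat{\FPC(\vfi)}^{x,\dom(\varepsilon)}$. The first link is immediate from Proposition~\ref{prop-qctlres} once we observe that the semantic equivalence $\equiv$ transfers to the environment semantics $\sat_\varepsilon$: reading $\calK, x \sat_\varepsilon \chi$ (as in Section~\ref{sec-reduc}) as satisfaction of $\chi$ in the structure $\calK$ whose labelling is extended by $\varepsilon$ on $\dom(\varepsilon)$, the equivalences of Lemmas~\ref{lemFP1} and~\ref{lemFP2}, being substitutive, hold for every such extended structure, hence under every $\varepsilon$. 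The second link is where the actual translation correctness lives.

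For the second link I would proceed by structural induction on $\FPC(\vfi)$, with the induction hypothesis stated uniformly over all subformulas $\chi$, all states $y$, and all environments $\varepsilon$ (so that descending through $\exists p$, where $\dom(\varepsilon)$ grows by one and $\widehat{\cdot}$ switches to $P \cup \{p\}$, is covered). By Proposition~\ref{prop-qctlres}(1) the formula $\FPC(\vfi)$ is built only from atomic propositions, Boolean connectives, propositional quantifiers, and the modalities $\EX$ and $\AG$, reading $\AX$ as $\non\,\EX\,\non$. For the cases of atomic propositions, $\non$, $\ou$, $\EX$ and $\exists p$ the defining clauses are exactly those of Table~\ref{reduc-main} already used for Theorem~\ref{theo-UU}, so those induction steps are verbatim the ones from the proof of that theorem and need only be cited (and $\AX$ follows from the $\EX$ and $\non$ cases). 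The single genuinely new case is $\AG$. Here I would unfold the semantics: $\calK, x \sat_\varepsilon \AG\,\chi$ holds iff $\calK, y \sat_\varepsilon \chi$ for every $y$ reachable from $x$, i.e.\ every $y$ with $(x,y)\in E^*$, using that $\calK$ is total so that the reachable states are exactly the states occurring on some infinite path from $x$. By the induction hypothesis at each such $y$ this is equivalent to $v_\varepsilon \sat \widehat{\chi}^{y,\dom(\varepsilon)}$ for every $(x,y)\in E^*$, which is precisely the conjunction $\ET_{(x,y)\in E^*}\widehat{\chi}^{y,\dom(\varepsilon)}$ defining $\widehat{\AG\,\chi}^{x,\dom(\varepsilon)}$ in rule~\ref{rule-ag-fp}.

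One bookkeeping point that needs care is that $\FPC$ introduces one fresh quantified proposition $z$ per Until, so the induction is carried out over $\FPC(\vfi)$ and over the enlarged proposition set; I would check that each such $z$ is always bound by its surrounding $\forall z$ and therefore never occurs free, so that the environment $\varepsilon$ (with domain among the original $\AP_Q^\Phi$) and the valuation $v_\varepsilon$ remain well-matched as $\dom(\varepsilon)$ grows through the $\exists p$/$\forall z$ clause. The main obstacle is not any individual case but making the two links fit together cleanly: one must justify invoking $\vfi \equiv \FPC(\vfi)$ under $\sat_\varepsilon$ rather than under the plain closed-formula semantics for which $\equiv$ was defined. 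I expect to dispatch this via the ``no semantic difference'' observation of Section~\ref{sec-reduc}, namely that $\calK, x \sat_\varepsilon \chi$ coincides with satisfaction in the $\varepsilon$-extended structure; once this identification is in place, the remaining argument is the routine induction above.
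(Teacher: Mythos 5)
Your proposal is correct and follows essentially the same route as the paper, whose entire proof is the one-line remark that the theorem is ``a direct consequence of Proposition~\ref{prop-qctlres}'': implicitly, the equivalence $\vfi \equiv \FPC(\vfi)$ plus the correctness of the translation rules for the $\EX$/$\AG$ fragment, which is exactly your two-link decomposition. You have merely made explicit what the paper leaves to the reader (the transfer of $\equiv$ to the environment semantics, the reuse of the Theorem~\ref{theo-UU} induction cases, and the new $\AG$ case via rule~\ref{rule-ag-fp}), and these details are all sound.
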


\begin{proof}
The proof is a direct consequence of  Proposition~\ref{prop-qctlres}.
\end{proof}

With this approach, the size of $\widehat{\FPC(\Phi)}^{x}$ is in $O((|\Phi|\cdot|\calK|)^{\HT(\Phi)})$. Indeed, an  $\EU$ (or $\AU$) modality gives rise to a  \QBF\ formula of size $(|V|+|E|)$.  
The exponential size comes from the potential nesting of temporal modalities~: to avoid it, one could consider the DAG-size of formulas. In the next section, we consider another solution.  Note also that the number of propositional variables in the \QBF\ formula is bounded by $|\AP_Q^\Phi|\cdot|V|$.

\subsection{Reduction via flat formulas}

To avoid the size explosion of $\widehat{\Phi}^x$, one can use an alternative approach for Prenex \QCTL\ formulas.
Remember that any \QCTL\ formula can be translated into an equivalent  \QCTL\ formula in Prenex normal form whose size is linear in the size of the  original formula~\cite{LM14}. 

In the following, a \CTL\ formula is said to be \emph{basic} when it is of the form $\EX \alpha$, $\Ex \alpha \Until \beta$ or $\All \alpha \Until \beta$ where $\alpha$ and $\beta$ are Boolean combinations of atomic propositions. 
It is easy to observe that any \CTL\ formula can be translated into a  \QCTL\ formula with a temporal height less or equal to 2~:
\begin{proposition}
\label{prop-ctl-flat}
Any  \CTL\ formula $\Phi$ is equivalent to some \QCTL\ formula $\Psi$ of the form~:
\[
\exists \{\kappa_1\ldots \kappa_m\}. \Big( \Phi_0 \et \ET_{i=1\ldots m} \AG (\kappa_i \Leftrightarrow   \theta_i)\Big)
\]
where $\Phi_0$ is a Boolean combination of basic \CTL\ formulas and every $\theta_i$ is a basic \CTL\  formula (for any $1\leq i \leq m$).  Moreover, $|\Psi|$ is in $O(|\Phi|)$. 
\end{proposition}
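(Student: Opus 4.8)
The plan is to \emph{flatten} $\Phi$ by a Tseitin-style renaming: introduce one fresh proposition per temporal subformula, and use the $\AG$-conjuncts to force each such proposition to agree, at every reachable state, with the truth value of the subformula it names. This turns every temporal operator into one whose arguments are purely Boolean, i.e.\ into a basic \CTL\ formula, while keeping the overall size linear.

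Concretely, first I would enumerate the temporal subformulas of $\Phi$ lying within the scope of some temporal modality, say $T_1,\ldots,T_m$ (those with $\tdepth_\Phi(T_i)\geq 1$), and attach to each $T_i$ a fresh proposition $\kappa_i$. For a \CTL\ subformula $\alpha$, let $\overline{\alpha}$ be the \emph{Boolean abstraction} obtained by replacing every maximal temporal subformula occurring in $\alpha$ by the proposition $\kappa_i$ naming it; by construction $\overline{\alpha}$ is a Boolean combination of (original and fresh) atomic propositions. For a temporal subformula $T$ with top modality applied to arguments $\chi$ (resp.\ $\chi_1,\chi_2$), define its \emph{basic abstraction} $\theta(T)$ to be $\EX\overline{\chi}$, $\Ex\overline{\chi_1}\Until\overline{\chi_2}$ or $\All\overline{\chi_1}\Until\overline{\chi_2}$ accordingly; since the $\overline{\chi}$ are Boolean, $\theta(T)$ is basic. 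I set $\theta_i\eqdef\theta(T_i)$, let $\Phi_0$ be $\Phi$ with each of its maximal temporal subformulas $T$ replaced by $\theta(T)$ (so $\Phi_0$ is a Boolean combination of basic \CTL\ formulas), and finally put $\Psi\eqdef\exists\{\kappa_1\ldots\kappa_m\}.\,\big(\Phi_0\et\ET_{1\leq i\leq m}\AG(\kappa_i\Leftrightarrow\theta_i)\big)$.

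The crucial observation for correctness is that each $\theta_i$ mentions only the propositions $\kappa_j$ attached to temporal subformulas \emph{strictly nested} inside $T_i$, so the apparent mutual recursion among the equivalences $\kappa_i\Leftrightarrow\theta_i$ is in fact well-founded for the ``proper temporal subformula'' ordering. I would then prove, by induction on $\tdepth_\Phi$, that under any labelling satisfying all conjuncts $\AG(\kappa_i\Leftrightarrow\theta_i)$ at $x_0$, the proposition $\kappa_i$ holds at a reachable state $y$ iff $\calK,y\sat T_i$: in the base case $\theta_i$ contains no $\kappa_j$, hence $\theta_i\equiv T_i$ and the conjunct pins $\kappa_i$ to $T_i$ on all reachable states; in the inductive step the hypothesis gives $\overline{\chi}\equiv\chi$ at every reachable state, so $\theta_i$ and $T_i$ agree at each reachable $y$ (each modality inspects only states reachable from $y$), and the $\AG$-conjunct again fixes $\kappa_i$. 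The two implications follow: for $(\Leftarrow)$, this equivalence yields $\Phi_0\equiv\Phi$ at $x_0$; for $(\Rightarrow)$, labelling each state $y$ with $\kappa_i$ exactly when $\calK,y\sat T_i$ satisfies every conjunct and makes $\Phi_0$ hold, so $\Psi$ holds (unreachable states may be labelled arbitrarily, harmlessly, since all formulas are evaluated only on reachable states). The size bound is routine: there is one $\kappa_i$, one conjunct, and one basic $\theta_i$ per temporal subformula, and the total size of the $\theta_i$ together with $\Phi_0$ is $O(|\Phi|)$ since each node of the syntax tree of $\Phi$ feeds only a bounded number of these local formulas, while the prefix contributes $O(m)=O(|\Phi|)$.

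The main obstacle I anticipate is not the construction but making the well-foundedness argument airtight: verifying that each $\AG$-conjunct constrains $\kappa_i$ on exactly the set of states (those reachable from $x_0$) over which the enclosing modalities are evaluated, and handling the edge cases ($\Phi$ already basic, or a pure Boolean combination of atoms) so that $\Phi_0$ genuinely has the announced shape.
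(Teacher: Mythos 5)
Your construction is exactly the paper's: the same fresh propositions $\kappa_i$ for the temporal subformulas at depth $\geq 1$, the same basic abstractions $\theta_i$ (depth-one temporal subformulas replaced by their $\kappa$'s), the same $\Phi_0$, and the same target formula $\Psi$. Where you differ is in how the equivalence $\Phi \equiv \Psi$ is proved. The paper argues by induction on the \emph{number} of temporal subformulas in $S$: it peels off one innermost (hence basic) subformula $\vfi_m$ at a time via the one-variable equivalence $\Phi \equiv \exists \kappa_m.\big(\Phi[\vfi_m \leftarrow \kappa_m] \et \AG(\kappa_m \equivalent \vfi_m)\big)$, applies the induction hypothesis to the rewritten formula, and then rearranges quantifiers and conjuncts (using that no $\kappa_i$ with $i<m$ occurs in $\vfi_m$). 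You instead introduce all the $\kappa_i$ simultaneously and prove a single semantic invariant -- under any labelling satisfying all conjuncts $\AG(\kappa_i \equivalent \theta_i)$, each $\kappa_i$ coincides with the truth set of the subformula it names on the states reachable from $x_0$ -- by induction along the (well-founded) nesting order of the temporal subformulas, deepest first; both implications then fall out, with the canonical labelling witnessing the existential direction. The two proofs rest on the same insight (the $\AG$-equivalences pin the $\kappa$'s to the right truth sets because reachability from $x_0$ is preserved under the modalities), but your simultaneous argument makes that insight and its well-foundedness explicit and avoids the substitution/quantifier-commutation bookkeeping of the paper's final chain of equivalences, whereas the paper's iterated rewriting keeps each individual step minimal and reuses substitutivity of $\equiv$ off the shelf. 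Your slight looseness in calling the induction one "on $\tdepth_\Phi$" should be stated as reverse induction on depth (or induction on the strict-nesting order, as you in fact use), but the base and inductive cases you give are the correct ones, and your size argument matches the paper's.
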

 
\begin{proof}
Let $S$ be the set of \emph{temporal} subformulas occurring in $\Phi$ at a temporal depth greater or equal to 1. Assume $S \eqdef \{\vfi_1, \ldots, \vfi_m\}$ and let $\{\kappa_1,\ldots,\kappa_m\}$ be $m$ fresh atomic propositions. For any $1\leq i \leq m$, we define $\theta_i$ as $\vfi_i$ where every  temporal subformula  $\vfi_j$ at temporal depth 1 in $\vfi_i$, is replaced by $\kappa_j$. $\Phi_0$ is  defined similarly from $\Phi$.

The equivalence  $\Phi \equiv \Psi$ can be proven by induction over $|S|$. If $|S|=0$, the formula satisfies the property. Now consider a formula with $|S|=m>0$. Let $\vfi_m \in S$ be a basic \CTL\ formula.  Let $\Phi'$ be the following formula~: 
$\exists \kappa_m.(\Phi[\vfi_m\leftarrow \kappa_m] \et \AG (\kappa_m  \equivalent \vfi_m))$, where $\vfi[\alpha\leftarrow \beta]$ is $\vfi$ where every occurrence of $\alpha$ is replaced by $\beta$. 
Then $\Phi$ and $\Phi'$ are equivalent~: any state reachable from the current state $x$ will be labeled by $\kappa_m$ iff $\vfi_m$ holds true at that state (NB~: the states that are not reachable from $x$ do not matter for the truth value of $\Phi$) and this enforces the equivalence.  
And then we can apply the induction hypothesis to get~:
\begin{align*}
\Phi & \equiv  \exists \kappa_m.(\Phi[\vfi_m\leftarrow \kappa_m] \et \AG (\kappa_m  \equivalent \vfi_m)) \\
  & \stackrel{i.h.}{\equiv} \exists\kappa_{m}.\Big[  \Big(\exists \{\kappa_1\ldots \kappa_{m-1}\}. \Big( \Phi_0 \et \ET_{i=1\ldots m-1} \AG (\kappa_i \Leftrightarrow   \theta_i) \Big) \Big) \et \AG (\kappa_m  \equivalent \vfi_m)\Big] \\
  & \equiv  \exists \{\kappa_1\ldots \kappa_m\}. \Big( \Phi_0 \et \ET_{i=1\ldots m} \AG (\kappa_i \Leftrightarrow   \theta_i) \Big)
 \end{align*}
Note that the last    equivalence comes from the fact that no $\kappa_i$ with  $i<m$ occurs in  $\vfi_m$ and we also have  $\theta_m=\vfi_m$.  
The size of $\Psi$ is linear in $|\Phi|$. 
\end{proof}

In the following, a \QCTL\ formula of the form ${\displaystyle \calQ. \big( \Phi_0 \et \ET_{i=1\ldots m} \AG (\kappa_i \Leftrightarrow   \theta_i)\big)}$, where $\calQ$ is a sequence of  quantifications,  $\Phi_0$ is a Boolean combination of basic \CTL\ formulas, and every $\theta_i$ (for $i=1,\ldots,m$) is a basic \CTL\ formula, is said to be a \emph{flat} formula. 
 
As a corollary of previous results, we have~:
\begin{proposition}
\label{prop-flat}
Any \QCTL\ formula $\Phi$ is equivalent to some flat formula whose size is linear in   $|\Phi|$. 
\end{proposition}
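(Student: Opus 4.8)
The plan is to combine the two normal-form results already available: the Prenex normal form for \QCTL\ (cited from~\cite{LM14}) and the flattening of \CTL\ formulas established in Proposition~\ref{prop-ctl-flat}. First I would put $\Phi$ into Prenex normal form, obtaining an equivalent formula $\calQ.\Psi$ where $\calQ$ is a sequence of propositional quantifiers and $\Psi$ is quantifier-free, hence a plain \CTL\ formula over $\AP$ (the propositions bound by $\calQ$ occurring as ordinary atoms inside $\Psi$). By the cited result, $|\calQ.\Psi|$ is linear in $|\Phi|$.

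Next I would apply Proposition~\ref{prop-ctl-flat} to the \CTL\ body $\Psi$, treating all its atoms uniformly. This yields an equivalent formula $\exists\{\kappa_1\ldots\kappa_m\}.\big(\Phi_0 \et \ET_{i=1\ldots m}\AG(\kappa_i \Leftrightarrow \theta_i)\big)$ in which $\Phi_0$ is a Boolean combination of basic \CTL\ formulas, every $\theta_i$ is basic, the $\kappa_i$ are fresh, and whose size is linear in $|\Psi|$, hence in $|\Phi|$.

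The key step is then to push this flattening under the prefix $\calQ$. Since the equivalence of Proposition~\ref{prop-ctl-flat} holds on every structure and $\equiv$ is substitutive, replacing $\Psi$ by its flattened form inside the scope of $\calQ$ preserves equivalence, giving $\Phi \equiv \calQ.\exists\{\kappa_1\ldots\kappa_m\}.\big(\Phi_0 \et \ET_{i=1\ldots m} \AG(\kappa_i \Leftrightarrow \theta_i)\big)$. Because the $\kappa_i$ are fresh, no capture occurs when the two prefixes are concatenated, so $\calQ' \eqdef \calQ\cdot\exists\{\kappa_1\ldots\kappa_m\}$ is again a legal sequence of quantifications and the whole formula has exactly the shape of a flat formula. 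Linearity of the result follows since the two transformations are linear and applied in sequence.

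The main obstacle I anticipate is justifying that flattening commutes with the outer quantifier prefix, that is, that substituting an equivalent subformula under $\calQ$ is sound. This rests on two points I would make explicit: that the propositional quantifiers preserve equivalence (if $\Psi \equiv \Psi'$ then $\exists p.\Psi \equiv \exists p.\Psi'$, and dually for $\forall p$), which is immediate from the structure semantics since the witnessing $p$-labelling satisfies $\Psi$ iff it satisfies $\Psi'$; and that the freshness of the $\kappa_i$ guarantees they are not already governed by $\calQ$, so the concatenated prefix neither rebinds nor shadows any proposition of $\Phi$.
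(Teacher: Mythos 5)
Your proof is correct and follows essentially the same route as the paper: translate $\Phi$ into Prenex normal form (citing~\cite{LM14}), then apply Proposition~\ref{prop-ctl-flat} to the quantifier-free \CTL\ matrix and concatenate the fresh $\exists\{\kappa_1\ldots\kappa_m\}$ prefix onto $\calQ$. The paper treats this as an immediate corollary defining $\Flat(\Phi)$; your explicit justification of substitutivity under the quantifier prefix and of non-capture via freshness of the $\kappa_i$ simply spells out what the paper leaves implicit.
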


Given a \QCTL\ formula $\Phi$, we use $\Flat(\Phi)$ to denote the flat formula equivalent to $\Phi$, obtained by first translating $\Phi$ into Prenex normal form, and then transform the \CTL\ subformula as described in Proposition~\ref{prop-ctl-flat}.

Applying method \MetFP\ to some flat  formula provides a \QBF\ formula of polynomial size since a flat formula has a temporal height less or equal to 2~:
\begin{corollary}
Given a \QCTL\ formula $\Phi$,  a Kripke structure $\calK=\tuple{V,E,\ell}$ and a state $x$, the \QBF\ formula 
\stackon[-8pt]{$ \FPC(\Flat(\Phi))$}{\vstretch{1.5}{\hstretch{8}{\widehat{\phantom{\;}}}}}$^{x}$
obtained by applying the rules of Table~\ref{reduc-main} and the rule~\ref{rule-ag-fp} is valid iff $\calK, x \sat \Phi$. 
And the size of $\stackon[-8pt]{$ \FPC(\Flat(\Phi))$}{\vstretch{1.5}{\hstretch{8}{\widehat{\phantom{\;}}}}}^{x}$ 
 is in $O(|V|\cdot(|V|+|E|)\cdot |\Phi|)$. 
\end{corollary}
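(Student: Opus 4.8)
The plan is to obtain both halves of the statement by combining the three preceding results: Proposition~\ref{prop-flat} (flattening preserves equivalence and keeps the size linear), the correctness theorem for method~\MetFP, and Proposition~\ref{prop-qctlres} (the \FPC\ transformation is linear and produces only the modalities $\EX$, $\AX$ and $\AG$). For correctness I would simply chain equivalences. By Proposition~\ref{prop-flat}, $\Phi \equiv \Flat(\Phi)$, so $\calK, x \sat \Phi$ iff $\calK, x \sat \Flat(\Phi)$, which, since the structure semantics coincides with the environment semantics for the empty environment ($\calK, x \sat \Flat(\Phi)$ iff $\calK, x \sat_\emptyset \Flat(\Phi)$), reduces the question to the empty environment. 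Applying the correctness theorem for \MetFP\ with $\varepsilon=\emptyset$ (hence $\dom(\varepsilon)=\emptyset$) gives $\calK, x \sat_\emptyset \Flat(\Phi)$ iff $v_\emptyset \sat \widehat{\FPC(\Flat(\Phi))}^{x,\emptyset}$. The only point to remark is that $\widehat{\FPC(\Flat(\Phi))}^{x,\emptyset}$ is a \emph{closed} \QBF\ formula: starting from $P=\emptyset$, a variable $p^y$ appears only after $p$ has been inserted into $P$ by the rule for $\exists p$, which simultaneously binds all the $p^{v_j}$; since every quantified proposition of $\Flat(\Phi)$ is eventually bound, no free variable survives, and for a closed formula $v_\emptyset$-satisfaction is exactly validity. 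This yields $\calK, x \sat \Phi$ iff $\widehat{\FPC(\Flat(\Phi))}^{x}$ is valid.

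For the size I would account for the reduction factor by factor. First, $|\Flat(\Phi)|$ and $|\FPC(\Flat(\Phi))|$ are both $O(|\Phi|)$ by Propositions~\ref{prop-flat} and~\ref{prop-qctlres}, and $\FPC(\Flat(\Phi))$ uses only $\EX$, $\AX$ and $\AG$ (the $\AX$ arising from the $\All\Until$ encoding costs the same as $\EX$, being a combination/conjunction over the out-edges of the current state). The decisive structural fact is the shape of a flattened-then-fixed-point formula: up to the outer quantifier prefix it is a Boolean combination of conjuncts $\AG(\kappa_i \Leftrightarrow \theta_i')$, where each $\theta_i'$ is the fixed-point encoding of a basic formula and therefore contains one further $\AG$ together with an $\EX$ (or $\AX$) applied to Boolean combinations of atoms at its leaves. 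Each $\AG$ generates a conjunction over the $\leq |V|$ states reachable from the current one, each $\EX$/$\AX$ a disjunction/conjunction over the out-edges of the current state, and each Boolean leaf is encoded in constant size per atom. Thus for a single conjunct the outer $\AG$ contributes a factor $|V|$; inside, the inner $\AG$ contributes another factor $|V|$ on the Boolean part, while the innermost $\EX$ contributes only $O(|E|)$ in total (out-degrees summing to $|E|$), giving $O(|V|^2\cdot|\theta_i| + |V|\cdot|E|)$ per conjunct. Summing over the $m \leq |\Phi|$ conjuncts with $\sum_i |\theta_i| = O(|\Phi|)$, and noting that the quantifier prefix adds only $O(|V|\cdot|\AP_Q^\Phi|)=O(|V|\cdot|\Phi|)$ while $\Phi_0$ is dominated, the whole formula has size $O(|V|^2\cdot|\Phi| + |\Phi|\cdot|V|\cdot|E|) = O(|V|\cdot(|V|+|E|)\cdot|\Phi|)$.

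The hard part is the size estimate rather than the correctness. The crude bound $O((|\Phi|\cdot|\calK|)^{\HT})$ inherited from method~\MetFP\ is too weak here, since $\HT(\FPC(\Flat(\Phi)))$ may reach $3$ and would only yield a cubic estimate. The work is to exploit that the two surviving levels of $\AG$ nesting each cost exactly one factor $|V|$, whereas the $\EX$/$\AX$ modalities occur only at the very bottom and hence contribute $|E|$ additively (once) rather than multiplicatively. Keeping the per-conjunct factors $|\theta_i|$ separate and summing them against the linear size of $\Flat(\Phi)$ is precisely what converts the generic cube into the sharper $O(|V|\cdot(|V|+|E|)\cdot|\Phi|)$ bound.
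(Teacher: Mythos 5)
Your proposal is correct, and its correctness half is exactly the derivation the paper intends: the paper states this as an immediate corollary, implicitly chaining $\Phi \equiv \Flat(\Phi)$ (Proposition~\ref{prop-flat}) with the correctness theorem for \MetFP\ applied to the flat formula; your additional remark that the resulting \QBF\ formula is closed (so that satisfaction under the empty valuation is the same as validity) is a point the paper glosses over, and it is a legitimate one. Where you genuinely go beyond the paper is the size bound. The paper justifies it only by the hint that a flat formula has temporal height at most~$2$, which, fed into the generic estimate $O((|\Phi|\cdot|\calK|)^{\HT})$ from method \MetFP, would only yield a quadratic (or cubic, counting the extra level added by \FPC) bound in $|\Phi|\cdot(|V|+|E|)$ --- strictly weaker than the claimed $O(|V|\cdot(|V|+|E|)\cdot|\Phi|)$. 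Your factor-by-factor accounting supplies the missing argument: the outer $\AG$ of each conjunct $\AG(\kappa_i \Leftrightarrow \theta_i)$ costs a factor $|V|$, the inner $\AG$ of the fixed-point encoding costs another factor $|V|$ on the Boolean payload, while the innermost $\EX/\AX$ contributes $O(|E|)$ additively per outer state (out-degrees summing to $|E|$), giving $O(|V|^2 |\theta_i| + |V|\cdot|E|)$ per conjunct; summing against $\sum_i |\theta_i| = O(|\Phi|)$ and $m \leq |\Phi|$ conjuncts, plus the $O(|V|\cdot|\Phi|)$ quantifier prefix, yields exactly the stated bound. In short: same route for correctness, but your proof actually establishes the size claim that the paper asserts without justification.
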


Therefore this reduction (called~\MetFPF)  provides a \PSPACE\ algorithm for   \QCTL\ model-checking. 
But there are two disadvantages of this approach. First, putting the formula into Prenex normal form may increase the number of quantified atomic propositions and the number of alternations (which is \emph{in fine} linear in the number of quantifiers in the original formula)\cite{LM14}. For example, when extracting a quantifier $\forall$ from some $\EX$ modality, we need to introduce two propositions, this can be seen for  the formula $\EX (\forall p. (\AX p \ou \AX \non p))$ which  is translated as~:
\[
 \exists z. \forall p. \forall z'. \Big(    (\EX (z \et z') \impl \AX(z \impl z')) \et  \EX (z \et   (\AX p \ou \AX \non p))\Big)
\]
where the proposition $z$ is used to mark a state, and $z'$ is used to  enforce that only at most one successor is labeled by $z$. 
Of course, these two remarks may have a strong impact on the complexity of the decision procedure. 
Finally, note also that the resulting \QBF\ formula is not in Prenex normal form.

\subsection{Variant of \MetFPF}

In the previous reduction, the modalities $\EU$ and $\AU$ may introduce an alternation of quantifiers~: 
an  atomic proposition $\kappa$ is introduced by an existential quantifier, and then an universal quantifier introduces a variable $z$ to encode the fixed point characterisation of $\Until$. We propose another reduction in order to avoid this alternation~: for this, we will use bit vectors (instead of single Boolean values) associated with every state to encode the distance from the current state to a state satisfying the right-hand side of the Until modality. 
 
First, we consider a $\QCTL$\ formula $\Phi$ under  negation normal form (NNF), where the negation is only applied to   atomic propositions. This transformation makes that $\Phi$ is built from temporal modalities in $S_{tmod} = \{\EX, \AX, \EU, \AU, \EW, \AW\}$

We can then reformulate Proposition~\ref{prop-flat} as follows~: 
\begin{proposition}
\label{prop-meth5}
Any  \QCTL\ formula $\Phi$ is equivalent to some \QCTL\ formula $\Psi$ in NNF of the form~:
${\displaystyle 
\Psi \eqdef  \calQ \:  \exists \{\kappa_1\ldots \kappa_m\}. \Big( \Phi_0 \et \ET_{i=1\ldots m} \AG (\kappa_i \impl   \theta_i)\Big)
}$
where $\calQ$ is a sequence of quantifications, $\Phi_0$ is a \CTL\ formula containing only the temporal modalities $\EX$, $\AX$ or $\AG$ and whose temporal height  is less or equal to 1, and  every  $\theta_i$ is a basic \CTL\ formula (with $1 \leq i \leq m$).  Moreover, $|\Psi|$ is in $O(|\Phi|)$. 
\end{proposition}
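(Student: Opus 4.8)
The plan is to adapt the flattening of Proposition~\ref{prop-ctl-flat} to the negation normal form (NNF) setting, the only genuinely new ingredient being that the biconditionals $\AG(\kappa_i \Leftrightarrow \theta_i)$ are replaced by the one-sided implications $\AG(\kappa_i \impl \theta_i)$. First I would put $\Phi$ into NNF (linear, with modalities in $\{\EX,\AX,\EU,\AU,\EW,\AW\}$) and then into Prenex form, extracting the prefix $\calQ$ (linear, by~\cite{LM14}); let $\phi$ be the resulting NNF \CTL\ matrix. As in Proposition~\ref{prop-ctl-flat} I would introduce one fresh proposition $\kappa_i$ for each temporal subformula $\vfi_i$ occurring at temporal depth $\geq 1$ in $\phi$, and additionally for each top-level subformula whose outermost modality is one of $\EU,\AU,\EW,\AW$. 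I set $\theta_i$ to be $\vfi_i$ with its depth-one temporal subformulas replaced by the corresponding $\kappa_j$ (so each $\theta_i$ is a basic \CTL\ formula, in NNF also permitting $\AX$ and the weak-until variants, always over Boolean arguments), and $\Phi_0$ to be $\phi$ with all these subformulas replaced by their $\kappa_i$. Every Until-type modality is thereby absorbed into some $\theta_i$, so $\Phi_0$ keeps only the directly encodable next- and global-modalities $\EX,\AX,\AG$ applied to Boolean arguments, whence $\HT(\Phi_0)\leq 1$. Linearity of $|\Psi|$ is immediate, since each temporal subformula contributes exactly one $\kappa_i$ and one conjunct.

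The heart of the argument is proving $\Phi\equiv\Psi$ with implications only; it suffices to show, for each fixed valuation of the $\calQ$-bound propositions, that $\phi$ and the inner existentially quantified matrix agree at $x_0$. Write $\theta_i[V]$ for the set of states satisfying $\theta_i$ under a $\kappa$-labelling $V=(V_i)$, and let $W_i$ be the set of reachable states at which $\vfi_i$ holds. Completeness is easy: the labelling $\kappa_i\mapsto W_i$ satisfies even the stronger $\AG(\kappa_i\Leftrightarrow\theta_i)$ and makes $\Phi_0$ evaluate to $\phi$, exactly as in Proposition~\ref{prop-ctl-flat}, so $\phi$ at $x_0$ yields $\Psi$. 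For soundness I would exploit that, in NNF, every $\kappa_i$ occurs only positively in $\Phi_0$ and in each $\theta_j$, and every modality is monotone in its arguments. The crucial structural fact is that the dependency of $\theta_i$ on the $\kappa_j$ is acyclic: $\theta_i$ mentions $\kappa_j$ only when $\vfi_j$ is a strict temporal subformula of $\vfi_i$. Ordering the propositions by this well-founded subformula order, I would prove by induction that any labelling satisfying the constraints $\AG(\kappa_i\impl\theta_i)$ satisfies $V_i\subseteq W_i$ on the reachable states: indeed $V_i\subseteq\theta_i[V]$ there by the constraint, and $\theta_i[V]\subseteq\theta_i[W]=W_i$ by monotonicity together with the induction hypothesis $V_j\subseteq W_j$ for the strictly smaller $\kappa_j$ that $\theta_i$ may mention. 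Since $\Phi_0$ is positive in the $\kappa_i$, $\Phi_0[V]$ holding at $x_0$ then forces $\Phi_0[W]=\phi$ at $x_0$, so $\Psi$ implies $\Phi$.

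The step I expect to be the main obstacle is this soundness direction: setting up the monotonicity lemma for NNF \CTL\ (all operators monotone in positively occurring propositions) and checking that the $\theta$-dependency is genuinely acyclic, so that the induction on subformula order is well founded. A technical point to watch is that all reasoning must be confined to states reachable from $x_0$, since the $\AG$-constraints bind only those states and the truth of $\Phi_0$ at $x_0$ depends only on them; once this is pinned down the induction is routine. The remaining bookkeeping — that $\Phi_0$ contains only $\EX,\AX,\AG$ with temporal height $\leq 1$, that each $\theta_i$ is basic, and that $|\Psi|=O(|\Phi|)$ — follows directly from the syntactic construction, and lifting the matrix-level equivalence through the common prefix $\calQ$ is immediate since it holds for every valuation of the quantified propositions.
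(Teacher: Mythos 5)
Your proposal follows essentially the same route as the paper's proof: flatten the Prenex/NNF formula as in Proposition~\ref{prop-ctl-flat}, obtain one direction from the biconditional variant $\AG(\kappa_i \Leftrightarrow \theta_i)$, and prove the converse $\Psi \impl \Phi$ by exploiting that each $\kappa_j$ occurs only positively, so that an induction along the well-founded subformula (temporal-height) order shows any labelling satisfying the constraints $\AG(\kappa_i \impl \theta_i)$ under-approximates the exact truth sets on reachable states. Your explicit set-inclusion induction $V_i\subseteq W_i$ with a monotonicity lemma is just a more formal rendering of the paper's minimal-counterexample argument, so the two proofs coincide in substance.
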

\begin{proof}
Consider w.l.o.g.\ a \QCTL\ formula   $\Phi$ in Prenex normal form and NNF. We can define $\Phi_0$ and the basic formulas   $\theta_i$s approximately   as in Proposition~\ref{prop-ctl-flat}, except that $\Phi_0$ contains only $\EX$,  $\AX$ or $\AG$ modalities, and every other modality gives rise to some quantified proposition $\kappa$ and a subformula $\AG(\ldots)$ in the main cunjunction of $\Psi$. Every $\theta_i$ starts with  
a modality in  $S_{tmod}$. Let $\vfi_i$ be the original $\Phi$-subformula   associated with $\theta_i$. 
Note that   $\Psi$ is in NNF, and  $\kappa_i$ occurs only once in $\Psi$ in the scope of a negation, and it happens in the subformula $\AG(\kappa_i \impl \theta_i)$. We now have to show that $\Phi$ is equivalent to $\Psi$. Consider the formula $\widetilde{\Psi}$ where every $\impl$ is replaced by $\equivaut$~: by following the same arguments of Proposition~\ref{prop-ctl-flat}, we clearly have $\Phi \equiv \widetilde{\Psi}$, and $\widetilde{\Psi} \impl \Psi$. It remains to prove the opposite direction.

To prove $\Psi \impl  \widetilde{\Psi}$,  it is sufficient to show that this is true for the empty $\calQ$ (as equivalence is substitutive). 
Assume $\calK, x \sat_\varepsilon \Psi$. Then there exists an environment $\varepsilon'$  from  $\{\kappa_1,\ldots,\kappa_m\}$ to $2^V$ such that $\calK, x \sat_{\varepsilon\compo\varepsilon'} \Phi_0 \et \ET_i \AG (\kappa_i \impl \theta_i)$. Now we have~:
\[
\forall i,\quad \calK, x  \sat_{\varepsilon \compo\varepsilon'} \theta_i \quad\impl\quad \calK, x  \sat_{\varepsilon'}  \vfi_i  \quad \mbox{and} \quad \calK, x  \sat_{\varepsilon \compo\varepsilon'}  \Phi_0 \quad\impl\quad \calK, x  \sat_{\varepsilon'}  \widetilde{\Phi_0}
\]
Indeed, assume that it is not true and $\calK, x  \sat_{\varepsilon \compo\varepsilon'} \theta_i$ and  $\calK, x  \not\sat_{\varepsilon'}  \vfi_i$.  Consider such a  formula $\vfi_i$ with the smallest temporal height. The only atomic propositions $\kappa_j$ occurring in $\theta_i$ are then associated with some $\theta_j$ and $\vfi_j$ which verify the property and thus any state satisfying such a $\theta_j$, also satisfies $\vfi_j$. Therefore any state labeled by such a $\kappa_j$ is correctly labeled (and satisfies $\vfi_j$). And the states that are not labeled by  $\kappa_j$ cannot make $\theta_i$ to be wrongly evaluated to true (because $\kappa_j$ is not in the scope of a negation). Therefore $\vfi_i$ holds true at $x$. The same holds for $\Phi_0$ and $\widetilde{\Phi_0}$.
As a direct consequence, we have  $\Psi \impl  \widetilde{\Psi}$.
\end{proof}
 
From the previous proposition, we  derive a new reduction to \QBF\ (called~\MetX). For modalities $\EW$ and $\AW$, we use the same encoding as for method \MetFP, except that we use the corresponding Boolean proposition $\kappa_i$ directly for the greatest fixed point. And for Until-based modalities, corresponding to least fixed points, we will use bit vectors instead of atomic propositions to encode the truth value of $\EU$ or $\AU$~: for  a formula  $\theta_i=\Ex \vfi \Until \psi$, we will consider a bit vector $\bar{\kappa_i}$  of length $\lceil\log(|V|+1)\rceil$ for every state instead of a single Boolean value  $\kappa_i$. The idea is that   in a state $x$, the value $\bar{\kappa_i}$ encodes in binary the distance (in terms of number of transitions) from $x$ to a state satisfying $\psi$ along a path satisfying  $\vfi$. And for $\theta_i=\All \vfi \Until \psi$, the value $\bar{\kappa_i}$ encodes the maximal distance before a state satisfying $\psi$ (along a path where $\vfi$ is true).
In the following, such a $\kappa_i$ associated to a $\theta_i$ based on an Until is called an Until-$\kappa$.
Note that given a bit vector $\bar{\kappa_i}$ and an integer value $d$ encoded in binary, we will use $[\bar{\kappa_i}^y=d]$ and $[\bar{\kappa_i}^y<d]$ to denote the corresponding  propositional formulas over $\bar{\kappa_i}^y$.

\begin{table}
\begin{align*}
\widehat{\vfi \et \psi}^{x,P} &\eqdef  \widehat{\vfi}^{x,P} \et \widehat{\psi}^{x,P}  \quad \quad
\widehat{\vfi \ou \psi}^{x,P} \eqdef  \widehat{\vfi}^{x,P} \ou \widehat{\psi}^{x,P} \quad \quad
\widehat{\non  p}^{x,P} \eqdef   \non \widehat{p}^{x,P} \\
\widehat{\EX  \vfi}^{x,P} & \eqdef   \OU_{(x,y) \in E} \widehat{\vfi}^{y,P} \quad \quad 
\widehat{\AX  \vfi}^{x,P} \eqdef   \ET_{(x,y) \in E} \widehat{\vfi}^{y,P} \quad \quad 
\widehat{\AG  \vfi}^{x,P} \eqdef   \ET_{(x,y) \in E^*} \widehat{\vfi}^{y,P}  \\
\text{For}\ Q\in\{\exists,\forall\} & \quad \widehat{Q p. \vfi}^{x,P} \eqdef 
Q p^{v_1} \ldots p^{v_n}  \widehat{\vfi}^{x,P\cup\{p\}}  \\
\text{For}\ Q\in\{\exists,\forall\} & \quad \widehat{Q \dot{\kappa}_i. \vfi}^{x,P} \eqdef \begin{cases} 
Q \bar{\kappa_i}^{v_1} \ldots \bar{\kappa_i}^{v_n}.  \widehat{\vfi}^{x,P\cup\{\bar{\kappa_i\}}} & \text{if} \ \dot{\kappa}_i \:\text{encodes a least fixed point.} \\  
Q \kappa_i^{v_1} \ldots \kappa_i^{v_n}.  \widehat{\vfi}^{x,P\cup\{\kappa_i\}} & \text{otherwise}  \end{cases} \\
\widehat{p}^{x,P} \eqdef & \begin{cases} p^x & \text{if}\: p \in P\ \text{and}\ p\ \text{is a Boolean var.}\\ 
[p^x<|V|] & \text{if}\: p \in P\ \text{and}\ p\ \text{is a Bit vect.}\\ 
\top & \text{if}\:  p \not\in P  \et p \in \ell(x) \\ 
\bot & \text{otherwise} \\ \end{cases} \\ 
\end{align*}
\caption{Transformation rules for method~\MetX}
\label{meth5}
\end{table}

The new reduction is  based on the rewriting rules of Table~\ref{meth5} for several operators, and 
we define the reduction for $\AG(\kappa_i \impl \theta_i)$ for $\theta_i=\Ex \vfi \WUntil \psi$  or $\theta_i=\All \vfi \WUntil \psi$  as follows~:

\begin{align}
\stackon[-8pt]{$\AG (\kappa_i \impl \Ex \vfi \WUntil \psi)$}{\vstretch{1.5}{\hstretch{15}{\widehat{\phantom{\;}}}}}^{\;\;x,P}  \eqdef &  \ET_{(x,y) \in E^*} \Big(  \kappa_i^y  \impl \big( \widehat{\psi}^{y,P} \ou (\widehat{\vfi}^{y,P} \et \OU_{(y,y')\in E} \kappa_i^{y'})\big)\Big)   \label{eq5EW} \\
\stackon[-8pt]{$\AG (\kappa_i \impl \All \vfi \WUntil \psi)$}{\vstretch{1.5}{\hstretch{15}{\widehat{\phantom{\;}}}}}^{\;\;x,P}  \eqdef &  \ET_{(x,y) \in E^*} \Big(  \kappa_i^y  \impl \big( \widehat{\psi}^{y,P} \ou (\widehat{\vfi}^{y,P} \et \ET_{(y,y')\in E} \kappa_i^{y'})\big)\Big)  
 \label{eq5AW}
\end{align}

And for $\AG(\kappa_i \impl \theta_i)$ with $\theta_i=\Ex \vfi \Until \psi$  or  $\theta_i=\All \vfi \Until \psi$, we have~:
\begin{align}
\stackon[-8pt]{$\AG (\bar{\kappa_i} \impl \Ex \vfi \Until \psi)$}{\vstretch{1.5}{\hstretch{15}{\widehat{\phantom{\;}}}}}^{\;\;x,P}  \eqdef &  \ET_{(x,y) \in E^*}   \Big[ \Big(  [\bar{\kappa_i}^y=0] \impl \widehat{\psi}^{y,P} \Big) \nonumber \\
 & \et \ET_{1\leq d <|V|} \Big( [\bar{\kappa_i}^y=d] \impl   \big(  \widehat{\vfi}^{y,P} \et \OU_{(y,y')\in E} [\bar{\kappa_i}^{y'}=d-1]\big)\Big) \Big]
 \label{eq5EU}
\end{align}
where $[\bar{\kappa_i}^y=d]$ and $[\bar{\kappa_i}^y<d]$  with a given value $d$ correspond to propositional formulas over $\bar{\kappa_i}^y$.
And for $\AG(\kappa_i \impl \theta_i)$ with $\theta_i=\All \vfi \Until \psi$, we have~:
\begin{align}
\stackon[-8pt]{$\AG (\bar{\kappa_i}\impl \All \vfi \Until \psi)$}{\vstretch{1.5}{\hstretch{15}{\widehat{\phantom{\;}}}}}^{\;\;x,P}  \eqdef &  \ET_{(x,y) \in E^*} \Big[ \Big(  [\bar{\kappa_i}^y=0] \impl \widehat{\psi}^{y,P} \Big) \nonumber \\
 & \et \ET_{1\leq d <|V|} \Big( [\bar{\kappa_i}^y=d] \impl   \big(  \widehat{\vfi}^{y,P} \et \ET_{(y,y')\in E} [\bar{\kappa_i}^{y'}<d]\big)\Big) \Big]
\label{eq5AU}
\end{align}

Note that in the equivalences \ref{eq5EU} and \ref{eq5AU}, we cannot replace the implication by an equivalence~: for a state $y$ s.t.\ $\widehat{\vfi}^{y,P}$, the existence of some successor $y'$ s.t.\ $[\bar{\kappa_i}^{y'}<d]$ is not sufficient to imply 
$[\bar{\kappa_i}^{y}=d]$. This is why we consider only implications here. 

Method~\MetX\  is defined from previous rules and we get a Prenex \QBF\  formula~:
\begin{corollary}
Given a \QCTL\ formula $\Phi$,  a Kripke structure $\calK=\tuple{V,E,\ell}$, and a state $x$, the Prenex \QBF\ formula $\widehat{\Flat(\Phi)}^{x}$ obtained by applying the rules of Tables~\ref{meth5} and rules~\ref{eq5EW}, \ref{eq5AW}, \ref{eq5EU},  and \ref{eq5AU} above, is valid iff $\calK, x \sat \Phi$, and $|\widehat{\Flat(\Phi)}^{x}|$ is in $O(|V|\cdot(|V|+|E|)\cdot |\Phi|)$. 
\end{corollary}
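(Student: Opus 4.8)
The plan is to reduce the statement to a correctness lemma for the rewriting rules, in the same style as Theorem~\ref{theo-UU}. First I would apply Proposition~\ref{prop-meth5} to replace $\Phi$ by the equivalent flat NNF formula $\Flat(\Phi)=\calQ\,\exists\{\kappa_1,\ldots,\kappa_m\}.\big(\Phi_0\et\ET_{i}\AG(\kappa_i\impl\theta_i)\big)$, so that it suffices to prove that $\widehat{\Flat(\Phi)}^{x}$ is valid iff $\calK,x\sat\Flat(\Phi)$. For the Boolean connectives, the propositional quantifiers and the modalities $\EX$, $\AX$, $\AG$, the rules of Table~\ref{meth5} are the same (up to duality) as those already shown correct in Theorem~\ref{theo-UU} and in rule~\ref{rule-ag-fp}, so the induction $\calK,x\sat_\varepsilon\vfi\Leftrightarrow v_\varepsilon\sat\widehat{\vfi}^{x,\dom(\varepsilon)}$ transfers verbatim. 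The only new content lies in rules~\ref{eq5EW}--\ref{eq5AU}, which encode the conjuncts $\AG(\kappa_i\impl\theta_i)$.

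The crux is the correctness of these four rules, exploiting that in $\Flat(\Phi)$ every $\kappa_i$ occurs only positively. For a Weak-Until $\theta_i$ (rules~\ref{eq5EW},~\ref{eq5AW}) I would reuse the fixed-point reasoning of Lemmas~\ref{lemFP1} and~\ref{lemFP2}: the constraint says that the set of reachable states marked by $\kappa_i$ is a post-fixed point of the monotone map $Z\mapsto\psi\ou(\vfi\et\EX Z)$ (resp.\ with $\AX$), whose greatest fixed point is the set of states satisfying $\theta_i$; hence any satisfying marking is below $\theta_i$ and the maximal one equals it, which with positivity yields both directions. For an Until $\theta_i$ the Boolean encoding would wrongly admit the greatest (weak-until) solution, which is exactly why the bit vector $\bar{\kappa_i}$ is introduced, reading $[\bar{\kappa_i}^y<|V|]$ as ``$y$ is marked'' and $\bar{\kappa_i}^y$ as a witnessing distance. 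I would then prove both directions for rules~\ref{eq5EU} and~\ref{eq5AU}. For completeness, if $\calK,y\sat\Ex\vfi\Until\psi$ a shortest witnessing path is simple, hence of length $<|V|$, and storing that length in $\bar{\kappa_i}^y$ (with decreasing values along the path, and value $|V|$ at unmarked states) satisfies~\ref{eq5EU}; in the $\All$ case one stores at each state the maximal first-hitting distance to a $\psi$-state over all paths, again below $|V|$. For soundness, from any assignment satisfying the constraint with $\bar{\kappa_i}^y<|V|$, an induction on $d=\bar{\kappa_i}^y$ extracts a witness ($d=0$ forces $\psi$, $d>0$ forces $\vfi$ and a successor of value $d-1$), while for~\ref{eq5AU} the strict decrease $[\bar{\kappa_i}^{y'}<d]$ along every successor forces all paths to reach a $\psi$-state.

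I expect the $\All\Until$ rule~\ref{eq5AU} to be the main obstacle, precisely because it uses the strict inequality $[\bar{\kappa_i}^{y'}<d]$ rather than an exact decrement: one must verify that a merely decreasing (not uniformly decrementing) labelling still certifies that every path terminates in a $\psi$-state, and that the implication rather than an equivalence is enough --- the latter relying on the observation made after~\ref{eq5AU} that a marked state need only carry some consistent distance, not its exact one. Once correctness holds, the two remaining points follow by inspection. Each of the $m=O(|\Phi|)$ conjuncts $\AG(\kappa_i\impl\theta_i)$ unfolds as a conjunction over the at most $|V|$ reachable states, and for each state over the $<|V|$ candidate distances a formula ranging over its out-edges; since $\sum_y\mathrm{outdeg}(y)=|E|$, one conjunct has size $O(|V|\cdot(|V|+|E|))$ (absorbing the $O(\log|V|)$ per bit-comparison), giving the bound $O(|V|\cdot(|V|+|E|)\cdot|\Phi|)$ overall. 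Finally, rules~\ref{eq5EW}--\ref{eq5AU} introduce no quantifiers and every $\kappa_i$ is bound by the leading block $\exists\{\kappa_1,\ldots,\kappa_m\}$, so all quantifiers sit in front of a quantifier-free body: the formula is Prenex, which was the whole purpose of replacing the $\forall z$ fixed-point encoding of method \MetFP.
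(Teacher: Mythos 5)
Your proposal is correct and follows essentially the same route the paper intends: the paper states this corollary without an explicit proof, as a direct consequence of Proposition~\ref{prop-meth5} and the semantics of rules~\ref{eq5EW}--\ref{eq5AU} sketched in the surrounding text (distances for the least-fixed-point modalities, Boolean post-fixed-point markings for the weak ones), and your argument --- canonical markings via shortest/maximal hitting distances bounded by $|V|$, soundness by induction on the stored distance, positivity of the $\kappa_i$ to get by with implications, then size and Prenex inspection --- is precisely that justification spelled out. The one loose point, absorbing the $O(\log|V|)$ per bit-comparison into the stated size bound, is inherited from the paper's own (unproven) claim, so it is not a gap in your reconstruction.
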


This approach allows us  to easily adapt the algorithm to bounded model-checking~: instead of considering values from $1$ to $|V|$ for $d$ in the definition of Until modalities, one can restrict the range to a smaller interval, to get a smaller \QBF\ formula to check. Note also that obtaining a Prenex \QBF\ formula is interesting in practice, because many \QBF-solvers require Prenex formulas as inputs.

We can also observe that this reduction proceeds a bit like the classical model-checking algorithm for \CTL\ where for deciding   $x \sat \Phi$, the truth value of every  $\Phi$-subformula is computed in every state of the model. In some case, this may induce additional work compared to methods like \MetFP\ or \MetUU\ (for example to decide whether the initial state satisfies a $\EU$ formula).  

\subsubsection{Using BitVectors for $\exists^1$ and  $\forall^1$.}

The quantifiers $\exists^1$ and $\forall^1$ are very useful in many specifications. It can be interesting to develop ad-hoc algorithms  in order to improve the generated \QBF\ formulas. For the first methods we described, they are translated into propositional formulas (instead of introducing extra quantified atomic propositions as they are formally defined). 
Another method consists in using bit vectors as in the treatment of Until modalities described above~: in this case, these quantifiers introduce a unique bit vector of size $\lceil\log(|V|+1)\rceil$ to store the number of the state selected by the quantifier (that is the state  which will be labeled by the proposition). This method is interesting, since it reduces the number of quantified propositions, it is integrated in the method~\MetX.


\section{Experimental results}

\label{sec-experiments}

In this section, we consider three examples to illustrate the \QBF-based model-checking approach for \QCTL. We propose these problems because  the size of the   structures can easily be scaled up, and the  properties to be checked are quite different. Note that these properties cannot be expressed  with classical temporal logics.

We have implemented a prototype to try the different reduction strategies~\footnote{NB: For reductions~\MetFPF\ and~\MetX\ the formula has to be given in Prenex normal form.} . Our tool is available online~\footnote{\url{https://www.irif.fr/~francoisl/qctlmc.html}}~: given a  Kripke structure $\calK$ with a state $x$ and a formula $\Phi$, it produces a specification file (corresponding to $\widehat{\Phi}^{x}$) for the SMT-solver Z3~\cite{MouraB08}. The choice of Z3 was motivated by the fact that the generated \QBF\ formulas are not always Prenex, which many \QBF-solvers require, unlike Z3.

\subsection{$k$-connectivity}

Here, we consider an undirected graph, and we want to check whether there exist (at least) $k$ internally disjoint paths~\footnote{Two paths  paths $src \leftrightarrow r_1 \leftrightarrow \ldots \leftrightarrow r_k \leftrightarrow dest$  and $src \leftrightarrow r'_1 \leftrightarrow \ldots \leftrightarrow r'_{k'} \leftrightarrow dest$  are internally disjoint iff $r_i  \not= r'_{j}$ for any $1 \leq i \leq k$ and $1 \leq j \leq k'$. And note that with this def., if there is an edge $(x,y)$, there exist $k$ internally disjoint paths from $x$ to $y$ for any $k$.} from a vertex $x$ to some vertex $y$.
A classical result in graph theory due to Menger   ensures that, given two vertices $x$ and $y$ in a graph $G$, the minimum number of vertices whose deletion makes that there is no more paths between $x$ and $y$  is equal to the maximum number of internally disjoint paths between these two vertices.

We can encode these two ideas by the following \QCTL\ formulas (interpreted over $x$)~:
\begin{align}
\Phi_k \;&\eqdef\; \exists p_1 \ldots \exists p_{k-1} \:\Big( \ET_{1\leq i < k} \EX \big(\Ex (p_i \et \ET_{j\not= i} p_j) \:\Until\: y\big) \;\et\; \EX\: \Ex (\ET_{1\leq i <k} \non p_i) \:\Until\: y \Big)  \label{kconF1}\\
\Psi_k \;&\eqdef\; \forall^1 p_1 \ldots \forall^1 p_{k-1} \: \EX  \: \Big( \Ex \big( \ET_{1\leq i < k} \non p_i\big) \:\Until\: y\Big) \label{kconF2}
\end{align}

$\Phi_k$ uses the labelling by the $p_i$'s to mark the internal vertices of $k$ paths between the current position and the vertex $y$. The modality $\EX$ is used to consider only the intermediate states (and not the starting state). The formula $\Psi_k$ proceeds differently~: the idea is to mark exactly $k-1$ states with $p_1,\ldots,p_{k-1}$ and to verify that there still exists at least one path leading to $y$ without going through the states labeled by some $p_i$. 

By Menger's Theorem, we know that these formulas are equivalent over undirected graphs.

We interpret these formulas over Kripke structures $\calS_{n,m}$ with $n\geq m$ (see Figure~\ref{fig-kconnect}) which correspond to two kinds of grid $n\times n$ connected by $m$ edges (these edges are of the form $(q_{i,n},r_{1,i})$ or $(q_{n,i},r_{i,1})$
The initial state is $q_{0,0}$ and when evaluating $\Phi_k$ or $\Psi_k$ we assume the state $r_{n,n}$ to be labeled by $y$.  In this context, we clearly have that $\Phi_k$ and $\Psi_k$ hold for true at $q_{0,0}$ iff $k\leq m$.

\begin{figure}[h]
\centering
\begin{tikzpicture}[inner sep=0pt,scale=1.3]
\draw (0,2) node[minirond,vert] (q00) {$q_{1,1}$} ;
\draw (0.5,2.5) node[minirond,vert] (q01) { $q_{1,2}$} ;
\draw (1,3) node[minirond,white] (q02) {\textcolor{black}{$\;\ldots\;$}} ;
\draw (1.5,3.5) node[minirond,vert] (q03) {\footnotesize {$\quad\;\;$}} ;
\draw (2,4) node[minirond,vert] (q04) {\footnotesize $q_{1,n}$} ;
\draw (0.5,1.5) node[minirond,vert] (q10) {\footnotesize $q_{2,1}$} ;
\draw (1,2) node[minirond,vert] (q11) {\footnotesize $q_{2,2}$} ;
\draw (1.5,2.5) node[minirond,white] (q12) {\textcolor{black}{$\;\ldots\;$}} ;
\draw (2,3) node[minirond,vert] (q13) {\footnotesize {$\quad\;\;$}} ;
\draw (2.5,3.5) node[minirond,vert] (q14) {\footnotesize $q_{2,n}$} ;
\draw (1,1) node[minirond,white] (q20) {\textcolor{black}{$\;\ldots\;$}} ;
\draw (1.5,1.5) node[minirond,white] (q21) {\textcolor{black}{$\;\ldots\;$}} ;
\draw (2,2) node (q22) { $\;\ldots\;$} ;
\draw (2.5,2.5) node[minirond,white] (q23) {\textcolor{black}{$\;\ldots\;$}} ;
\draw (3,3) node[minirond,white] (q24) {\textcolor{black}{$\;\ldots\;$}} ;
\draw (1.5,0.5) node[minirond,vert] (q30) {\footnotesize $\quad\;\;$} ;
\draw (2,1) node[minirond,vert] (q31) {\footnotesize $\quad\;\;$} ;
\draw (2.5,1.5) node[minirond,white] (q32) {\textcolor{black}{$\;\ldots\;$}} ;
\draw (3,2) node[minirond,vert] (q33) {\footnotesize {$\quad\;\;$}} ;
\draw (3.5,2.5) node[minirond,vert] (q34) {\footnotesize $\quad\;\;$} ;
\draw (2,0) node[minirond,vert] (q40) {\footnotesize $q_{n,1}$} ;
\draw (2.5,.5) node[minirond,vert] (q41) {\footnotesize $q_{n,2}$} ;
\draw (3,1) node[minirond,white] (q42) {\textcolor{black}{$\;\ldots\;$}} ;
\draw (3.5,1.5) node[minirond,vert] (q43) {\footnotesize {$\quad\;\;$}} ;
\draw (4,2) node[minirond,vert] (q44) {\footnotesize $q_{n,n}$} ;

\draw[>=latex,<->](q00) -- (q01) ;
\draw[>=latex,<->](q01) -- (q02) ;
\draw[>=latex,<->](q02) -- (q03) ;
\draw[>=latex,<->](q03) -- (q04) ;
\draw[>=latex,<->](q10) -- (q11) ;
\draw[>=latex,<->](q11) -- (q12) ;
\draw[>=latex,<->](q12) -- (q13) ;
\draw[>=latex,<->](q13) -- (q14) ;
\draw[>=latex,<->](q30) -- (q31) ;
\draw[>=latex,<->](q31) -- (q32) ;
\draw[>=latex,<->](q32) -- (q33) ;
\draw[>=latex,<->](q33) -- (q34) ;
\draw[>=latex,<->](q40) -- (q41) ;
\draw[>=latex,<->](q41) -- (q42) ;
\draw[>=latex,<->](q42) -- (q43) ;
\draw[>=latex,<->](q43) -- (q44) ;
\draw[>=latex,<->](q00) -- (q10) ;
\draw[>=latex,<->](q10) -- (q20) ;
\draw[>=latex,<->](q20) -- (q30) ;
\draw[>=latex,<->](q30) -- (q40) ;
\draw[>=latex,<->](q01) -- (q11) ;
\draw[>=latex,<->](q11) -- (q21) ;
\draw[>=latex,<->](q21) -- (q31) ;
\draw[>=latex,<->](q31) -- (q41) ;
\draw[>=latex,<->](q03) -- (q13) ;
\draw[>=latex,<->](q13) -- (q23) ;
\draw[>=latex,<->](q23) -- (q33) ;
\draw[>=latex,<->](q33) -- (q43) ;
\draw[>=latex,<->](q04) -- (q14) ;
\draw[>=latex,<->](q14) -- (q24) ;
\draw[>=latex,<->](q24) -- (q34) ;
\draw[>=latex,<->](q34) -- (q44) ;
\draw[>=latex,<->,out=73,in=-160] (q00) to  (q02);
\draw[>=latex,<->,out=75,in=-165] (q00) to  (q03);
\draw[>=latex,<->,out=85,in=-175] (q00) to  (q04);

\draw[>=latex,<->,out=-73,in=160] (q00) to  (q20);
\draw[>=latex,<->,out=-75,in=165] (q00) to  (q30);
\draw[>=latex,<->,out=-85,in=175] (q00) to  (q40);

\begin{scope}[shift={(5,0)}]
\draw (0,2) node[minirond,vert] (r00) {$r_{1,1}$} ;
\draw (0.5,2.5) node[minirond,vert] (r01) { $r_{1,2}$} ;
\draw (1,3) node[minirond,white] (r02) {\textcolor{black}{$\;\ldots\;$}} ;
\draw (1.5,3.5) node[minirond,vert] (r03) {\footnotesize {$\quad\;\;$}} ;
\draw (2,4) node[minirond,vert] (r04) {\footnotesize $r_{1,n}$} ;
\draw (0.5,1.5) node[minirond,vert] (r10) {\footnotesize $r_{2,1}$} ;
\draw (1,2) node[minirond,vert] (r11) {\footnotesize $r_{2,2}$} ;
\draw (1.5,2.5) node[minirond,white] (r12) {\textcolor{black}{$\;\ldots\;$}} ;
\draw (2,3) node[minirond,vert] (r13) {\footnotesize {$\quad\;\;$}} ;
\draw (2.5,3.5) node[minirond,vert] (r14) {\footnotesize $r_{2,n}$} ;
\draw (1,1) node[minirond,white] (r20) {\textcolor{black}{$\;\ldots\;$}} ;
\draw (1.5,1.5) node[minirond,white] (r21) {\textcolor{black}{$\;\ldots\;$}} ;
\draw (2,2) node (r22) { $\;\ldots\;$} ;
\draw (2.5,2.5) node[minirond,white] (r23) {\textcolor{black}{$\;\ldots\;$}} ;
\draw (3,3) node[minirond,white] (r24) {\textcolor{black}{$\;\ldots\;$}} ;
\draw (1.5,0.5) node[minirond,vert] (r30) {\footnotesize $\quad\;\;$} ;
\draw (2,1) node[minirond,vert] (r31) {\footnotesize $\quad\;\;$} ;
\draw (2.5,1.5) node[minirond,white] (r32) {\textcolor{black}{$\;\ldots\;$}} ;
\draw (3,2) node[minirond,vert] (r33) {\footnotesize {$\quad\;\;$}} ;
\draw (3.5,2.5) node[minirond,vert] (r34) {\footnotesize $\quad\;\;$} ;
\draw (2,0) node[minirond,vert] (r40) {\footnotesize $r_{n,1}$} ;
\draw (2.5,.5) node[minirond,vert] (r41) {\footnotesize $r_{n,2}$} ;
\draw (3,1) node[minirond,white] (r42) {\textcolor{black}{$\;\ldots\;$}} ;
\draw (3.5,1.5) node[minirond,vert] (r43) {\footnotesize {$\quad\;\;$}} ;
\draw (4,2) node[minirond,vert] (r44) {\footnotesize $r_{n,n}$} ;

\draw[>=latex,<->](r00) -- (r01) ;
\draw[>=latex,<->](r01) -- (r02) ;
\draw[>=latex,<->](r02) -- (r03) ;
\draw[>=latex,<->](r03) -- (r04) ;
\draw[>=latex,<->](r10) -- (r11) ;
\draw[>=latex,<->](r11) -- (r12) ;
\draw[>=latex,<->](r12) -- (r13) ;
\draw[>=latex,<->](r13) -- (r14) ;
\draw[>=latex,<->](r30) -- (r31) ;
\draw[>=latex,<->](r31) -- (r32) ;
\draw[>=latex,<->](r32) -- (r33) ;
\draw[>=latex,<->](r33) -- (r34) ;
\draw[>=latex,<->](r40) -- (r41) ;
\draw[>=latex,<->](r41) -- (r42) ;
\draw[>=latex,<->](r42) -- (r43) ;
\draw[>=latex,<->](r43) -- (r44) ;
\draw[>=latex,<->](r00) -- (r10) ;
\draw[>=latex,<->](r10) -- (r20) ;
\draw[>=latex,<->](r20) -- (r30) ;
\draw[>=latex,<->](r30) -- (r40) ;
\draw[>=latex,<->](r01) -- (r11) ;
\draw[>=latex,<->](r11) -- (r21) ;
\draw[>=latex,<->](r21) -- (r31) ;
\draw[>=latex,<->](r31) -- (r41) ;
\draw[>=latex,<->](r03) -- (r13) ;
\draw[>=latex,<->](r13) -- (r23) ;
\draw[>=latex,<->](r23) -- (r33) ;
\draw[>=latex,<->](r33) -- (r43) ;
\draw[>=latex,<->](r04) -- (r14) ;
\draw[>=latex,<->](r14) -- (r24) ;
\draw[>=latex,<->](r24) -- (r34) ;
\draw[>=latex,<->](r34) -- (r44) ;
\draw[>=latex,<->,out=-15,in=105] (r04) to  (r44);
\draw[>=latex,<->,out=-25,in=115] (r14) to  (r44);
\draw[>=latex,<->,out=-20,in=120] (r24) to  (r44);
\draw[>=latex,<->,out=20,in=-100] (r40) to  (r44);
\draw[>=latex,<->,out=35,in=-110] (r41) to  (r44);
\draw[>=latex,<->,out=25,in=-115] (r42) to  (r44);
\end{scope}

\draw (4,3.5) node (aux1) {} ;
\draw (5,3.5) node (aux1p) {} ;
\draw (4,3) node (aux2) {} ;
\draw (5,3) node (aux2p) {} ;
\draw (4,1) node (aux3) {} ;
\draw (5,1) node (aux3p) {} ;
\draw[>=latex,<->](aux1) -- (aux1p) ;
\draw[>=latex,<->](aux2) -- (aux2p) ;
\draw[>=latex,<->](aux3) -- (aux3p) ;
\draw (4.5,2.5) node (aux1) {\ldots} ;

\draw (4.5,4) node (aux1) {$m$ edges} ;

\end{tikzpicture}
\caption{Structure $\calS_{n,m}$ for the $k$-connectivity problem.}
\label{fig-kconnect}
\end{figure}

Detailed results are presented in Appendix~\ref{app-res}. The main lessons we can see are~:
\begin{itemize}
\item Formula $\Phi_k$ is much more difficult to verify~: the number of temporal modalities is probably one explanation. Another one for methods \MetFP\ and \MetFPF (based on the fixed point characterization of $\Until/\WUntil$ modalities) could be an alternation of quantifiers~: in $\Phi_k$, there is an existential quantification over the $p_i$s and the $\EU$s introduce an universal quantification, but it is not the case for $\Psi_k$, where there are only universal quantifications for these reductions.
\item The reduction \MetFP\ is the most efficient~: it can be used to verify models with more than two thousands states when $m$ is small. The method \MetFPF\ is also rather efficient, but the flattening seems to be too costly for such a simple formula.
\item The reduction \MetUU\ produces very large \QBF\ formulas, but rather simple to check.
\end{itemize}

We can generalise the problem by verifying that there exist at least $k$ internally disjoint paths between any pair of reachable vertices $x$ and $y$ in a given structure. These previous formulas can be modified  as follows~: 
\begin{align}
\Phi^g_k \;&\eqdef\; \forall^1 y \exists p_1 \ldots \exists p_{k-1} \: \AG \Big( \!\!\ET_{1\leq i < k} \!\!\EX \big(\Ex (p_i \et \ET_{j\not= i} p_j) \:\Until\: y\big) \;\et\; \EX\: \Ex (\!\!\ET_{1\leq i <k} \!\!\non p_i) \:\Until\: y \Big) \label{kconGF1} \\
\Psi^g_k \;&\eqdef\; \forall^1 y \forall^1 p_1 \ldots \forall^1 p_{k-1}  \AG \Big[ \: \EX  \: \Big( \Ex \big( \ET_{1\leq i < k} \non p_i\big) \:\Until\: y\Big)\Big] \label{kconGF2} 
\end{align}

In that case, $\Phi^g_k$ is useless~: too complex to be verified. And the method \MetFP\ is still the most efficient.

\subsection{Nim game}

Nim game is a turn-based two-players game. A configuration is a set of heaps of objects and a boolean value indicating whose turn it is. At each turn, a player has to choose one non-empty heap and remove at least one object from it. The aim of each player is to remove the last object. Given a configuration $c$ and a Player-$J$ with $J\in\{1,2\}$, we can build a finite Kripke structure $\calS_J$, where $x_c$ is a state corresponding to the configuration $c$ ; and  use a \QCTL\ formula $\Phiwin^J$ such that $\calS, x_c \sat \Phiwin^J$ iff Player-$J$ has a wining strategy from $c$. Note that there is a simple and well-known criterion over the numbers of objects in each heap to decide who has a  winning strategy, but we consider this problem just because it is interesting to illustrate what kind of problem we can solve with  \QCTL. 

Each  configuration corresponds to a state in $\calS_J$. Every move for Player-$\bar{J}$ from a configuration $c$ to a configuration $c'$ provides a transition $(x_c,x_{c'})$ in $\calS_J$. However, a move of Player-$J$ from  $c$ to $c'$ is encoded as two transitions $x_c \fleche x_{c,c'} \fleche x_{c'}$ where $x_{c,c'}$ is then an intermediary state we   use to encode a strategy for Player-$J$ (marking $x_{c,c'}$ by an atomic proposition will correspond to Player-$J$ choosing $c'$ from  $c$). We assume that every state $x_c$ is labeled by $t_1$  if it's Player-$1$'s turn to play at $c$, and by $t_2$ otherwise. Every intermediary state $x_{c,c'}$ is labeled by $\text{int}$. 
We also label empty configurations by $w_1$ or $w_2$, depending on which player played the last move. 

Clearly, the size of $\calS$ will depend on the number of objects in each set in the initial configuration. The formula $\Phiwin^J$ depends only on $J$~:
\[
\Phiwin^J \;\eqdef\; \exists m. \Big( \AG \big(t_J \:\impl \: \EX m \big) \;\et\; \AF \big(w_J \ou (\text{int} \et \non m)\big)\Big)
\]
This formula holds true in a state corresponding to some configuration $c$ iff there  exists a labelling by $m$ such that every reachable configuration where it's Player-$J$'s turn, has a successor labeled by $m$ (thus a possible choice to do)  and every execution from the current state leads to either a winning state for Player-$J$ or a non-selected intermediary state, therefore all outcomes induced by the underlying strategy have to  verify $\F w_J$.  Note that in this example, the Kripke structure is acyclic (except the self-loops on the ending states).


From detailed results in appendix, we can see~:
\begin{itemize}
\item One can consider structures with more than 10 thousand states. Note that the number of heaps is important for the size of the model, but the maximal length of a game depends on the number of objects (and is rather small in our examples). 
\item The most efficient method is \MetFPF\ (with \MetFP).
\item Method \MetX\ is more efficient  when we  consider bounded model-checking. Note that in this case, the verification may not be complete~: if the \QBF\ formula is valid, the property is satisfied by the structure, otherwise, no conclusion can be done, except if we can prove that the chosen bound was big enough to be sure that there is no solution. In our case, we can easily compute the maximal bound~: at each turn, a player has to pick at least one object, such a move may give rise to one transition in the model (for the opponent), or two transitions (for the player for whom we look for a strategy). Thus, if there are $n$ objects in the initial configuration, we can choose $\frac{3n}{2}$ for the bound.
\end{itemize}

\subsection{Resources distribution}

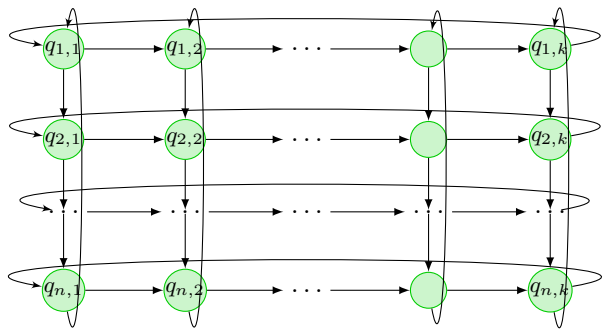
\begin{figure}[t]
\centering
\begin{tikzpicture}[inner sep=0pt,scale=0.8]
\draw (0,4) node[minirond,vert] (v00) {\footnotesize $q_{1,1}$} ;
\draw (2,4) node[minirond,vert] (v01) {\footnotesize $q_{1,2}$} ;
\draw (4,4) node (v02) { $\;\ldots\;$} ;
\draw (6,4) node[minirond,vert] (v03) {\footnotesize {$\quad\;\;$}} ;
\draw (8,4) node[minirond,vert] (v04) {\footnotesize $q_{1,k}$} ;
\draw (0,2.5) node[minirond,vert] (v10) {\footnotesize $q_{2,1}$} ;
\draw (2,2.5) node[minirond,vert] (v11) {\footnotesize $q_{2,2}$} ;
\draw (4,2.5) node  (v12) { $\;\ldots\;$} ;
\draw (6,2.5) node[minirond,vert] (v13) {\footnotesize $\quad\;\;$} ;
\draw (8,2.5) node[minirond,vert] (v14) {\footnotesize $q_{2,k}$} ;
\draw (0,0) node[minirond,vert] (v30) {\footnotesize $q_{n,1}$} ;
\draw (2,0) node[minirond,vert] (v31) {\footnotesize $q_{n,2}$} ;
\draw (4,0) node (v32) { $\;\ldots\;$} ;
\draw (6,0) node[minirond,vert] (v33) {\footnotesize $\quad\;\;$} ;
\draw (8,0) node[minirond,vert] (v34) {\footnotesize $q_{n,k}$} ;
\draw (0,1.3) node  (v20) { $\;\ldots\;$} ;
\draw (2,1.3) node  (v21) { $\;\ldots\;$} ;
\draw (4,1.3) node  (v22) { $\;\ldots\;$} ;
\draw (6,1.3) node  (v23) { $\;\ldots\;$} ;
\draw (8,1.3) node  (v24) { $\;\ldots\;$} ;

\draw[-latex](v00) -- (v01) ;
\draw[-latex](v01) -- (v02) ;
\draw[-latex](v02) -- (v03) ;
\draw[-latex](v03) -- (v04) ;
\draw[-latex](v10) -- (v11) ;
\draw[-latex](v11) -- (v12) ;
\draw[-latex](v12) -- (v13) ;
\draw[-latex](v13) -- (v14) ;
\draw[-latex](v20) -- (v21) ;
\draw[-latex](v21) -- (v22) ;
\draw[-latex](v22) -- (v23) ;
\draw[-latex](v23) -- (v24) ;
\draw[-latex](v30) -- (v31) ;
\draw[-latex](v31) -- (v32) ;
\draw[-latex](v32) -- (v33) ;
\draw[-latex](v33) -- (v34) ;

\draw[-latex](v00) -- (v10) ;
\draw[-latex](v01) -- (v11) ;
\draw[-latex](v03) -- (v13) ;
\draw[-latex](v04) -- (v14) ;
\draw[-latex](v10) -- (v20) ;
\draw[-latex](v11) -- (v21) ;
\draw[-latex](v13) -- (v23) ;
\draw[-latex](v14) -- (v24) ;
\draw[-latex](v20) -- (v30) ;
\draw[-latex](v21) -- (v31) ;
\draw[-latex](v23) -- (v33) ;
\draw[-latex](v24) -- (v34) ;

\draw[-latex',out=10,in=170] (v04) to  (v00);
\draw[-latex',out=10,in=170] (v14) to  (v10);
\draw[-latex',out=10,in=170] (v24) to  (v20);
\draw[-latex',out=10,in=170] (v34) to  (v30);

\draw[-latex',out=-80,in=80] (v30) to  (v00);
\draw[-latex',out=-80,in=80] (v31) to  (v01);
\draw[-latex',out=-80,in=80] (v33) to  (v03);
\draw[-latex',out=-80,in=80] (v34) to  (v04);

\end{tikzpicture}
\vspace{-0.2cm}
\caption{Structure $\calK_{n,k}$ for the resources distribution problem.}
\label{fig-resources}
\end{figure}

The last example is as follows~: given a Kripke structure $\calS$ and two integers $k$ and $d$, we aim at choosing at most $k$ states (called targets in the following) such that every reachable state (from the initial one) can reach a target in less than $d$ transitions. This problem can be encoded with the following \QCTL\ formula where $d$ modalities $\EX$ are nested~:
\[
\Phires \;\eqdef\;  \exists^1  c_1 \ldots \exists^1 c_k \: \AG \Big( (\!\!\OU_{1\leq i \leq k}\!\! c_i) \ou \EX \Big( 
(\!\!\OU_{1\leq i \leq k}\!\! c_i) \ou \Big( \ldots \ou \EX ( \!\!\OU_{1\leq i \leq k}\!\! c_i) \Big) \Big) \Big)
\]

For experimental  results, we consider the grid $\calK_{n,m}$ described at Figure~\ref{fig-resources}. Note that for this example, reductions~\MetUU\ and \MetFP\ are similar because there is no Until in the formula ($\AG$ is treated as a  conjunction). 
From detailed results in appendix, one can see~:
\begin{itemize}
\item Only small models have been successfully verified.
\item The nesting of $\EX$s operators give an advantage to the reduction based on flattening (\MetX\ and \MetFPF)~: the size of the \QBF\ formula increases more slowly.
\item The reduction \MetX\ is the most efficient on this example. Since there is no Until modality (except behind $\AG$ which is treated separately), the difference with \MetFPF\ is due to the encoding of $\exists_1$ operator with a unique bit vector in \MetX, this choice seems to be more efficient in this example. 
\end{itemize}


\section{Conclusion}

We have presented several reductions from \QCTL\ model-checking to \QBF. This provides a first tool for \QCTL\ model-checking. Of course, this is an ongoing-work, and many improvements, are possible~: the reduction strategies are still naive and could be significantly improved, and a better understanding of \QBF-solver would also be helpful to produce more efficient formulas (we have not yet tried to normalise formulas in a specific form which is often a crucial aspect in \SAT/\QBF-solving). Still, these first results are rather interesting and encouraging. They show the importance of writing "good"  \QCTL\ formulas for which the solver will be able to provide a result (this problem already exists for classical temporal logics, but it is more significant here due to the complexity induced by the quantifications). The examples also show that there is no "one best strategy": it  depends on the structure of the considered formula, and then offering several reduction strategies seems to be necessary in a \QBF-based model-checker for \QCTL.
Finally this work is also interesting because it could easily be adapted for other logics (like Sabotage logics~\cite{Benthem05}). In the future, we plan to continue to work on reduction strategies, and to use other \QBF-solvers.



\bibliography{biblio}

\begin{thebibliography}{10}

\bibitem{BiereCCSZ03}
Armin Biere, Alessandro Cimatti, Edmund~M. Clarke, Ofer Strichman, and Yunshan
  Zhu.
\newblock Bounded model checking.
\newblock {\em Advances in Computers}, 58:117--148, 2003.

\bibitem{BiereCCZ99}
Armin Biere, Alessandro Cimatti, Edmund~M. Clarke, and Yunshan Zhu.
\newblock Symbolic model checking without bdds.
\newblock In {\em Tools and Algorithms for Construction and Analysis of
  Systems, 5th International Conference, {TACAS} '99, Amsterdam, The
  Netherlands, March 22-28, 1999, Proceedings}, volume 1579 of {\em Lecture
  Notes in Computer Science}, pages 193--207. Springer, 1999.

\bibitem{CE82}
Edmund~M. Clarke and E.~Allen Emerson.
\newblock Design and synthesis of synchronization skeletons using
  branching-time temporal logic.
\newblock In Dexter~C. Kozen, editor, {\em {P}roceedings of the 3rd {W}orkshop
  on {L}ogics of {P}rograms ({LOP}'81)}, volume 131 of {\em Lecture Notes in
  Computer Science}, pages 52--71. Springer-Verlag, 1982.

\bibitem{DLM12}
Arnaud Da{~}Costa, Fran{\c c}ois Laroussinie, and Nicolas Markey.
\newblock Quantified~{CTL}: Expressiveness and model checking.
\newblock In Maciej Koutny and Irek Ulidowski, editors, {\em {P}roceedings of
  the 23rd {I}nternational {C}onference on {C}oncurrency {T}heory
  ({CONCUR}'12)}, volume 7454 of {\em Lecture Notes in Computer Science}, pages
  177--192. Springer-Verlag, September 2012.

\bibitem{MouraB08}
Leonardo~Mendon{\c{c}}a de~Moura and Nikolaj Bj{\o}rner.
\newblock {Z3:} an efficient {SMT} solver.
\newblock In {\em Tools and Algorithms for the Construction and Analysis of
  Systems, 14th International Conference, {TACAS} 2008, Budapest, Hungary,
  2008. Proceedings}, volume 4963 of {\em Lecture Notes in Computer Science},
  pages 337--340. Springer, 2008.

\bibitem{DershowitzHK05}
Nachum Dershowitz, Ziyad Hanna, and Jacob Katz.
\newblock Bounded model checking with {QBF}.
\newblock In {\em Theory and Applications of Satisfiability Testing, 8th
  International Conference, {SAT} 2005, St. Andrews, UK, June 19-23, 2005,
  Proceedings}, volume 3569 of {\em Lecture Notes in Computer Science}, pages
  408--414. Springer, 2005.

\bibitem{Fre01}
Tim French.
\newblock Decidability of quantified propositional branching time logics.
\newblock In Markus Stumptner, Dan Corbett, and Mike Brooks, editors, {\em
  {P}roceedings of the 14th {A}ustralian {J}oint {C}onference on {A}rtificial
  {I}ntelligence ({AJCAI}'01)}, volume 2256 of {\em Lecture Notes in Computer
  Science}, pages 165--176. Springer-Verlag, December 2001.

\bibitem{Fre03}
Tim French.
\newblock Quantified propositional temporal logic with repeating states.
\newblock In {\em {P}roceedings of the 10th {I}nternational {S}ymposium on
  {T}emporal {R}epresentation and {R}easoning and of the 4th {I}nternational
  {C}onference on {T}emporal {L}ogic ({TIME}-{ICTL}'03)}, pages 155--165. IEEE
  Comp. Soc. Press, July 2003.

\bibitem{Kup95a}
Orna Kupferman.
\newblock Augmenting branching temporal logics with existential quantification
  over atomic propositions.
\newblock In Pierre Wolper, editor, {\em {P}roceedings of the 7th
  {I}nternational {C}onference on {C}omputer {A}ided {V}erification
  ({CAV}'95)}, volume 939 of {\em Lecture Notes in Computer Science}, pages
  325--338. Springer-Verlag, July 1995.

\bibitem{LM14}
Fran{\c{c}}ois Laroussinie and Nicolas Markey.
\newblock Quantified {CTL:} expressiveness and complexity.
\newblock {\em Logical Methods in Computer Science}, 10(4), 2014.

\bibitem{LaroussinieM15}
Fran{\c{c}}ois Laroussinie and Nicolas Markey.
\newblock Augmenting {ATL} with strategy contexts.
\newblock {\em Inf. Comput.}, 245:98--123, 2015.

\bibitem{McMillan02}
Kenneth~L. McMillan.
\newblock Applying {SAT} methods in unbounded symbolic model checking.
\newblock In {\em Computer Aided Verification, 14th International Conference,
  {CAV} 2002,Copenhagen, Denmark, July 27-31, 2002, Proceedings}, volume 2404
  of {\em Lecture Notes in Computer Science}, pages 250--264. Springer, 2002.

\bibitem{PBDDC02}
Anindya~C. Patthak, Indrajit Bhattacharya, Anirban Dasgupta, Pallab Dasgupta,
  and P.~P. Chakrabarti.
\newblock Quantified computation tree logic.
\newblock {\em Information Processing Letters}, 82(3):123--129, 2002.

\bibitem{Pnu77}
Amir Pnueli.
\newblock The temporal logic of programs.
\newblock In {\em {P}roceedings of the 18th {A}nnual {S}ymposium on
  {F}oundations of {C}omputer {S}cience ({FOCS}'77)}, pages 46--57. IEEE Comp.
  Soc. Press, October-November 1977.

\bibitem{QS82a}
Jean-Pierre Queille and Joseph Sifakis.
\newblock Specification and verification of concurrent systems in {CESAR}.
\newblock In Mariangiola Dezani{-}Ciancaglini and Ugo Montanari, editors, {\em
  {P}roceedings of the 5th {I}nternational {S}ymposium on {P}rogramming
  ({SOP}'82)}, volume 137 of {\em Lecture Notes in Computer Science}, pages
  337--351. Springer-Verlag, April 1982.

\bibitem{sis83}
A.~Prasad Sistla.
\newblock {\em Theoretical Issues in the Design and Verification of Distributed
  Systems}.
\newblock PhD thesis, Harvard University, Cambridge, Massachussets, USA, 1983.

\bibitem{Benthem05}
Johan van Benthem.
\newblock An essay on sabotage and obstruction.
\newblock In {\em Mechanizing Mathematical Reasoning, Essays in Honor of
  J{\"{o}}rg H. Siekmann on the Occasion of His 60th Birthday}, volume 2605 of
  {\em Lecture Notes in Computer Science}, pages 268--276. Springer, 2005.

\end{thebibliography}

\appendix 


\section{Proof of Theorem~\ref{theo-UU}}

\label{app-uu}

\begin{proof}
We assume $\Phi$ to be fixed and we prove the property by structural induction over $\vfi$. Boolean operators are omitted. 
\begin{itemize}
\item $\vfi = p$~: if  $\calK,x\sat_\varepsilon p$, then either $p$ is a quantified proposition and  it belongs to $\dom(\varepsilon)$ and $x$ belongs to $\epsilon(p)$ and thus $v_\varepsilon \sat p^x$ by def.\ of $v_\varepsilon$, or $p$ belongs to $\ell(x)$. In  both cases we have $v_\varepsilon \sat \widehat{p}^{x,\dom(\varepsilon)}$. The converse is similar. 
\item $\vfi=\EX \psi$~:   $\calK,x\sat_\varepsilon \EX \psi$ iff there exists $(x,x')\in E$ s.t.\ $\calK,x ' \sat_\varepsilon \psi$, iff (by i.h.)  there exists $(x,x')\in E$ s.t.\  $v_\varepsilon \sat \widehat{\psi}^{x',\dom(\varepsilon)}$ which is equivalent to $v_\varepsilon \sat \widehat{\EX \psi}^{x,\dom(\varepsilon)}$. 
\item $\vfi=\exists p. \psi$. We have $\calK,x\sat_\varepsilon \exists p. \psi$ iff there exists  $V'\subseteq V$ s.t.\ $\calK,x \sat_{\varepsilon[p\mapsto V']} \psi$, iff (i.h.) $v_{\varepsilon[p\mapsto V']} \sat \widehat{\psi}^{x,\dom(\varepsilon)\cup\{p\}}$ which is equivalent to $v_{\varepsilon} \sat \exists p^{v_1}\ldots p^{v_n}.  \widehat{\psi}^{x,\dom(\varepsilon)\cup\{p\}}$ (by def.\ of    $v_\varepsilon$). 
\item $\vfi=\Ex \psi_1 \Until \psi_2$~: The definition of $\widehat{\vfi}^{x,\dom(\varepsilon)}$ corresponds to a finite unfolding of the expansion law that characterizes the $\EU$\ modality. Assume  $\calK,x\sat_\varepsilon \vfi$. There exists a  path $\rho \in 
\Path^\omega_\calK(x)$ and a position $i\geq 0$ s.t.\ $\rho(i) \sat_\varepsilon  \psi_2$ and $\rho(k) \sat_\varepsilon  \psi_1$ for any $0 \leq k < i$. The finite prefix $x=\rho(0)\cdots\rho(k)$ can be assumed to be simple, and then $k<|V|$. By using i.h., we get 
$v_\varepsilon \sat \widehat{\psi_2}^{\rho(i),\dom(\varepsilon)}$, and $v_\varepsilon \sat \widehat{\psi_1}^{\rho(k),\dom(\varepsilon)}$ for any $0\leq k <i$. From this point, the reader can easily verify by induction (starting at $i$, down to 0) that $v_\varepsilon\sat\overline{\Ex \psi_1\Until\psi_2}^{\rho(k), dom(\varepsilon), \{\rho(j)|j\leq k\}}$ for all $k\leq i$. This makes $\widehat{\vfi}^{x,\dom(\varepsilon)}$ to be satisfied by $v_\varepsilon$.

Conversely, assume $v_\varepsilon \sat \widehat{\vfi}^{x,\dom(\varepsilon)}$. Let us build a finite path $x=x_0, x_1, \cdots x_i$ that satisfies $\psi_1\Until\psi_2$. The first vertex is of course $x$, and we have $v_\varepsilon\sat\overline{\Ex\psi_1\Until\psi_2}^{x, dom(\varepsilon), \{x\}}$. Fix $i$ so that for every $0\leq k\leq i$, $x_k$ is built, $v_\varepsilon\sat\overline{\Ex\psi_1\Until\psi_2}^{x_k, dom(\varepsilon), \{x_j|j\leq k\}}$, $v_\varepsilon\not\sat\widehat{\psi_2}^{x_k, dom(\varepsilon)}$, and $(x_k, x_{k+1})\in E$ when $k<i$. Then, there must be $(x_i, y)\in E$ so that $y\not\in\{x_j|j\leq i\}$, and $v_\varepsilon\sat\overline{\Ex\psi_1\Until\psi_2}^{y, dom(\varepsilon), \{x_j|j\leq i\}\cup\{y\}}$, so we set $x_{i+1}=y$. The sequence eventually stops, because $V$ is finite and the path is simple. If $x_i$ is its last vertex, then we must have $v_\varepsilon\sat\widehat{\psi_2}^{x_i, dom(\varepsilon)}$, and $v_\varepsilon\sat\widehat{\psi_1}^{x_k, dom(\varepsilon)}$ for every $0\leq k<i$. By using i.h., we get $\calK, x\models_\varepsilon\phi$.
\item $\vfi=\All \psi_1 \Until \psi_2$~: this case is similar to the previous one, except that we have to consider loops.
Assume $\calK,x\sat_\varepsilon \vfi$. Then any path issued from $x$  satisfies $\psi_1 \Until \psi_2$. If $x$ contains a self-loop, then $\psi_2$ has to be satisfied  at $x$ (this is ensured by the first case in the def.\ of $\widehat{\vfi}^{x,\dom(\varepsilon)}$). Otherwise we consider all the paths from $x$~: either there is a simple prefix witnessing $\psi_1 \Until \psi_2$, or there is a loop from some point. In the   latter case, one of the state in the loop has to verify $\psi_2$. In both cases, the definition of   $\widehat{\vfi}^{x,\dom(\varepsilon)}$ gives the result. 
\end{itemize}
\end{proof}

\section{Experimental results} 
\label{app-res}

Detailed results for the three examples are presented in this section. In every case, we distinguished the time required to build the \QBF\ formula (the Z3 specification) and the time required to solve it. Times are given in seconds.

\subsection{$k$-connectivity}

\begin{small}
\[
\begin{array}{|l|c|c|c|c|c|}
\hline
 n, m & 3, 2 & 4, 3 & 4, 3 & 5, 4 & 35, 4 \\
 \text{formula} & \Psi_2 & \Psi_4 & \Phi_4 & \Psi_5 & \Psi_4\\
 \hline 
 \text{\# states} & 18 & 32 & 32 & 50 & 2450 \\
 \text{res} & \text{sat} & \text{unsat} & \text{unsat} & \text{unsat} & \text{sat}\\ 
 \hline
\hline
\multicolumn{6}{|c|}{\text{Reduction}\:  \MetUU} \\ 
\hline
\text{Time to build z3 form.\ } & 0.01 &  38.45 & 70.15 & - & - \\
\text{Size of z3 form.\ } & 7682 & 7 922 175 & 29 983 034 & - & - \\
\text{Time to solve z3 form.\ }&  0.04 & 2.38 & 8.61 & -  & - \\
\hline
\hline
\multicolumn{6}{|c|}{\text{Reduction} \: \MetFP} \\ 
\hline
\text{Time to build z3 form.\ } & 0 & 0.03 & 0.07 & 0.09  & 388 \\
\text{Size of z3 form.\ } & 1313 & 9097 & 11234 & 27 416 & 38 675 099 \\
\text{Time to solve z3 form.\ }& 0.03 & 0.06 & - & 0.16 & 132.67 \\
\hline
\hline
\multicolumn{6}{|c|}{\text{Reduction}\:  \MetFPF} \\ 
\hline
\text{Time to build z3 form.\ } & 0.03 & 0.1 & 0.25 & 0.1 & - \\
\text{Size of z3 form.\ } &  3666 & 20 944 & 56196 & 61 519 & - \\
\text{Time to solve z3 form.\ }& 0.12 & 0.28 & - & 57.96 & - \\
\hline
\hline
\multicolumn{6}{|c|}{\text{Reduction}\: \MetX} \\ 
\hline
\text{Time to build z3 form.\ } & 0.01 & 1 & 0.23 & 1.3 & - \\
\text{Size of z3 form.\ } & 6363 & 31243 & 107120 & 87681 & - \\
\text{Time to solve z3 form.\ }& 40.16 &  - & 43.09 & - & - \\
\hline
\hline
\end{array}
\]
\end{small}

\begin{small}
\[
\begin{array}{|l|c|c|c|c|}
\hline
 n, m & 3, 1 & 3, 3 & 9, 2 & 27, 2 \\
 \text{formula} & \Psi_2^g & \Psi_3^g & \Psi_3^g & \Psi_3^g \\
 \hline 
 \text{\# states} & 18 & 18 & 162 & 1458 \\
 \text{res} & \text{unsat} & \text{sat} & \text{unsat} & \text{unsat} \\ 
 \hline
\hline
\multicolumn{5}{|c|}{\text{Reduction}\:  \MetUU} \\ 
\hline
\text{Time to build z3 form.\ } & 1,08 & 9,3 & - & - \\
\text{Size of z3 form.\ } & 1195684 & 8632341 & - & - \\
\text{Time to solve z3 form.\ }&  0,43 & 8,96 & - & -  \\
\hline
\hline
\multicolumn{5}{|c|}{\text{Reduction} \: \MetFP} \\ 
\hline
\text{Time to build z3 form.\ } & 0.02 & 0,04     & 1,94      & 190,94 \\
\text{Size of z3 form.\ }          & 13268 & 18337 & 1698025 & 139992889 \\
\text{Time to solve z3 form.\ }& 0,1     & 0,23    & 45,51         & - \\
\hline
\hline
\multicolumn{5}{|c|}{\text{Reduction}\:  \MetFPF} \\ 
\hline
\text{Time to build z3 form.\ } & 0,02& 0.04 & 2.07      & 192 \\
\text{Size of z3 form.\ }         &  5180 & 6877 & 551447 & 44643959 \\
\text{Time to solve z3 form.\ }& 0,10 & 35,38 & 180      & - \\
\hline
\hline
\multicolumn{5}{|c|}{\text{Reduction}\: \MetX} \\ 
\hline
\text{Time to build z3 form.\ } & 0.1 & 0.05 & 0.95 & 195 \\
\text{Size of z3 form.\ } & 6285 & 8331 & 726681 & 59 447 253 \\
\text{Time to solve z3 form.\ }& 0,35 & - & - & - \\
\hline
\hline
\end{array}
\]
\end{small}

\subsection{Nim game}

\begin{small}
\[
\begin{array}{|l|c|c|c|c|c|c|c|}
\hline
 & [2,2] &  [3,2] &  [4,5,2]& [3,4,5] &  [2,3,4,4] & [5,4,3,6] &  [2,4,8,14] \\
 \hline 
 \text{\# states} & 16 & 31 & 280 & 328 &  398 & 1595 & 13556 \\
 \text{res} & \text{unsat} & \text{sat}  & \text{sat} & \text{sat} & \text{sat} & \text{sat} &\text{unsat} \\ 
 \hline
\hline
\multicolumn{8}{|c|}{\text{Reduction}\: \MetUU} \\ 
\hline
\text{Time to build z3 form.\ } & 0,00 & 0,00 & 0,09 & 0,31 &  8,99 &  - & - \\
\text{Size of z3 form.\ } & 32 &  105 & 90095 &  285505 & 1011324&   - & -  \\
\text{Time to solve z3 form.\ }& 0,02 &  0,03 & 0,09 & 0,10 &   0,39 &  - &- \\
\hline
\hline
\multicolumn{8}{|c|}{\text{Reduction}\:  \MetFP} \\ 
\hline
\text{Time to build z3 form.\ } & 0,00 &  0,00 & 0,00 & 0,00  & 0,00&  0,19 & 0,21  \\
\text{Size of z3 form.\ } & 104 & 216 & 2283 & 2698 &   3225 & 13865 & 125486 \\
\text{Time to solve z3 form.\ }&  0,01 & 0,02 & 0,04  & 0,4 &  0,06 &  0,15 &  62,2 \\
\hline
\hline
\multicolumn{8}{|c|}{\text{Reduction}\:  \MetFPF} \\ 
\hline
\text{Time to build z3 form.\ } & 0,00 & 0,00 & 0,03 & 0,05 &  0,07 & 0,23 & 6,95 \\
\text{Size of z3 form.\ } &  161 & 326 & 3323 & 3926 & 4703 & 19928 &178158  \\
\text{Time to solve z3 form.\ }& 0,02 & 0,03 & 0,12 & 0,10 & 0,18 &  0,18 &26,66 \\
\hline
\hline\multicolumn{8}{|c|}{\text{Reduction} \: \MetX} \\ 
\hline
\text{Time to build z3 form.\ } & 0,00 & 0,02 & 0,79  & 1,05 & 1,88 & 24,18 &  - \\
\text{Size of z3 form.\ } &  2468 & 9371 & 844412 & 1171832 & 1696982 & 29013387 &  - \\
\text{Time to solve z3 form.\ }& 0,05 & 0,12 & 207,6 &  341,6 & 879,8 &  - &  \\
\hline
\multicolumn{8}{|c|}{\text{Reduction}\: \MetX  \:\text{with bounded Until}} \\ 
\hline
\text{Bound for Until:}            & 10     & 20    & 20        & 20       & 21 & 28 & 43 \\
\text{Time to build z3 form.\ } & 0      & 0,01  & 0,08    & 0,1      & 0,1 &  0,48 & 6,43 \\
\text{Size of z3 form.\ }          & 1682 & 6291 & 65712 & 77816 & 97371 & 540997 & 7311052 \\
\text{Time to solve z3 form.\ }& 0,02  & 0,1    & 2,37    &  3,17   & 4,20  &  184,94 &  - \\
\hline
\hline
\end{array}
\]
\end{small}

\subsection{Resources distribution}

\begin{small}
\[
\begin{array}{|l|c|c|c|c|c|c|}
\hline
n \times m & 10\times5 &  10\times5 &  10\times5 & 10\times7 &  10\times10 & 10\times10 \\
k,d & 2,8 &  4,4  &  4,6 & 4,5 &  2,3 & 4,7 \\
 \hline 
 \text{\# states} & 50 & 50 & 50 & 70 &  100 & 100 \\
 \text{res} & \text{sat} & \text{unsat}  & \text{sat} & \text{unsat} & \text{unsat} & \text{?}  \\ 
 \hline
\hline
\multicolumn{7}{|c|}{\text{Reduction}\: \MetUU - \MetFP} \\ 
\hline
\text{Time to build z3 form.\ } & 0,03 &  0,08 & 0,07 & 0,10  & 0,13 &  0,22   \\
\text{Size of z3 form.\ } & 112155 & 32357 & 70757 & 74417 &  45905  & 283907  \\
\text{Time to solve z3 form.\ }& 0,08  & 274,00 & 0,06  & - &  0,09 &  -  \\
\hline
\hline
\multicolumn{7}{|c|}{\text{Reduction}\:  \MetFPF} \\ 
\hline
\text{Time to build z3 form.\ } & 0,05 &  0,08 & 0,10 & 0,10  & 0,17 &  0,19  \\
\text{Size of z3 form.\ } &  14515 & 23213 & 24715 & 44744 & 43510   & 90916  \\
\text{Time to solve z3 form.\ } & 0,06  & 308,48 & 0,09  & - &  0,10 &  -  \\
\hline
\hline\multicolumn{7}{|c|}{\text{Reduction} \: \MetX} \\ 
\hline    
\text{Time to build z3 form.\ } & 0,08 &  0,02 & 0,03     & 0,03  & 0,02 &  0,09   \\
\text{Size of z3 form.\ }          & 9521 & 7633 & 11133   & 13123 & 7521 & 25733\\
\text{Time to solve z3 form.\ }&  0,05 & 27,16 &  0,02       &  155 &   0,09  &  -  \\
\hline
\hline
\end{array}
\]
\end{small}

\end{document}